\renewcommand{\d}{\text{d}}
\newcommand{\N}{\mathbb{N}}
\newcommand{\R}{\mathbb{R}}
\newcommand{\I}{1{\hskip -2.5 pt}\hbox{I} }
\newcommand{\dSN}[4]{f_{SN}(#1; #2,#3,#4)}
\newtheorem{theorem}{Theorem}
\newtheorem{corollary}{Corollary}
\newtheorem{lemma}{Lemma}
\title{Bayesian nonparametric location-scale-shape mixtures}
\author{Antonio Canale\thanks{Department of Economics and Statistics, University of Turin and Collegio Carlo Alberto, Italy  \mbox{(\emph{antonio.canale@unito.it})}}$\,$  and Bruno Scarpa\thanks{Department of Statistical Sciences, University of Padua, Italy  \mbox{(\emph{scarpa@stat.unipd.it})}}  }
\begin{document}

\maketitle 

\begin{abstract}
Discrete mixture models are one of the most successful approaches for density estimation. Under a Bayesian nonparametric framework,  Dirichlet process location-scale mixture of Gaussian kernels is the golden standard, both having nice theoretical properties and computational tractability. In this paper we explore the use of the skew-normal kernel, which can naturally accommodate several degrees of skewness by the use of a third parameter. The choice of this kernel function allows us to formulate nonparametric location-scale-shape mixture prior with large support and good performance in different applications. 
Asymptotically, we show that this modelling framework is consistent in frequentist sense. Efficient Gibbs sampling algorithms are also discussed and the performance of the methods are tested through simulations and applications to galaxy velocity and fertility data. Extensions to accommodate discrete data are also discussed.
\end{abstract}

{\center \textbf{Keywords: }}
Dirichlet process; large support; posterior consistency; rounded mixture priors; skew-normal distribution

\section{Introduction}

Discrete mixture models are routinely used for univariate and multivariate density estimation. A discrete mixture model characterizes the density of $y \in \mathcal{Y} \subset \R$ as 
\begin{equation}
	f(y) = \sum_{h=1}^k \pi_h K(y;\theta_h)
\label{eq:mix1}
\end{equation}
where $\sum_{h=1}^k \pi_h = 1$ and $K(\cdot;\theta)$ is a kernel function parametrized by a vector of parameters $\theta$.
In~(\ref{eq:mix1}), $k$ can be any finite integer leading to a finite mixture model, or $\infty$ leading to an infinite, or nonparametric, mixture model. 
Bayesian mixture models generalize model (\ref{eq:mix1}) by
\begin{align}
& f(y)  =  \int K(y; \theta) dP(\theta), \,\,\, P \sim \Pi, \notag
\end{align}
where $P$ is a mixing measure (in equation \eqref{eq:mix1} this measure is discrete), and $\Pi$ is a prior over the space of mixing measures. Stick breaking priors \citep{ishw:jame:2001} are convenient choices for $P$ since a draw from a stick-breaking prior is a discrete probability measure almost surely. Among them the most used is the Dirichlet process (DP) prior \citep{art:ferg:1973,art:ferg:1974}. A Dirichlet process mixture (DPM) model can be written in form \eqref{eq:mix1} marginalizing out $P$, namely  
\begin{equation}
	f(y) = \sum_{h=1}^\infty \pi_h K(y;\theta_h), \qquad \theta_h \stackrel{iid}{\sim} P_0, \qquad \pi = \{\pi_h\} \sim \mbox{Stick}(\alpha)
\end{equation}
where $P_0$ is a base probability measure and 
Stick$(\alpha)$ denotes the stick-breaking process by \citet{seth:1994} with positive scalar parameter $\alpha$.
The choice of a Gaussian kernel $K(\cdot;\theta)$
gives the DPM of Gaussians \citep{art:lo:1984, art:esco:west:1995} which is computationally convenient and has nice theoretical properties. For example, it has been proved \citep{art:lo:1984}
that it can approximate any continuous density, including asymmetric, fat tailed and  multimodal ones; e.g., kernels with similar locations but different scales can lead to heavy-tailed and skewed distributions.  In addition, from the Bayesian asymptotic point of view the DPM of Gaussians prior leads to posterior consistency \citep{ghos:etal:1999,barr:etal:1999,tokd:2006,cana:debl:2013}, so if  $f_0$ is the true density that generates the data, under mild regularity conditions, the posterior concentrates on a $\epsilon$-neighborhood of $f_0$ with a given rate contraction \citep{ghos:etal:2000,ghos:vand:2001,ghos:vand:2007,walk:etal:2007,shen:etal:2012}.

If $y \in \mathcal{Y} \subset \N$ is a discrete random variable, the same ideas can be ideally extended into the settings of  probability mass function estimation but, in this case,  limited literature is available.  A common strategy is to use a mixture of Poissons or negative binomials, which unfortunately are quite restrictive. \citet{cana:duns:2011} recently proposed to induce a discrete kernel by rounding a continuous kernel. 
A rounded mixture of Gaussians prior has been showed to be successful to fit simulated and real data, and to inherit the strong theoretical support from the Gaussian case. 

An interesting feature of finite mixture models, both for continuous and count observations, is the induced clustering structure \citep{fral:raft:2002},  so that each component can be seen as a cluster of units whose results are usually clearly interpretable.
However, a common concern  is related to the number of mixture component allowed, i.e., it may happen that redundant mixture components with similar locations and scales are estimated. Clearly this form of overfitting may lead to an unnecessarily complex model which is particularly unappealing if the sample size is small, and it induces a lack of interpretability due to the overlapping of similar kernels.
To deal with this problem \citet{petr:etal:2013} propose a repulsive mixture prior which favors well separated components and  can lead to more interpretable clustering structure. 

Clearly, when the data actually show different sub-populations, the choice of Gaussian kernel  leads to symmetric clusters.
However, if these sub-populations are not symmetric, this procedure can fail to detect the real sub-population structure.
For example, by considering the data about the global cognition scores of 451 patients suffering from Alzheimer's disease, \citet{art:fruh:pyne:2010} show that
DPM of Gaussians estimates quite well the density of the global cognition scores, using 3 mixture components. However, the data are clearly bimodal and the mixture of Gaussians needs 3 mixture components only to fit the skewness of the data.

To deal with this issue, a mixture of more flexible kernels, which accounts for several degrees of skewness, may be appropriate to obtain well-separated asymmetric clusters. 
To this end, we explore the use of the \citet{art:azza:1985}'s skew-normal kernel within the nonparametric mixture model framework which allows the model to retain both computational tractability and good theoretical properties. 
The case of discrete variables can also be included in this framework by exploiting the rounding procedure of \citet{cana:duns:2011} with skew-normal kernels in place of classic Gaussian kernels.
 
Finite mixtures of skew-normals have been already discussed in the literature both in the frequentist and Bayesian context. \citet{art:lin:etal:2007} discuss a finite mixture of skew-normal model assuming the number of components to be fixed. These authors propose
 an EM and a Gibbs Sampling algorithm for the frequentist and Bayesian estimation of the parameters, respectively. \citet{art:fruh:pyne:2010}, in fully Bayesian setting, discuss mixtures of skew-normal and skew-$t$, motivated by multivariate data arising from biotechnological applications. They provide an interesting discussion about the number of components, involving reversible jump MCMC and evaluation of posterior probability via information criteria. However, from a practical point of view, it is not clear how to choose the number $k$ of components, and in practice they fixed it \textsl{a priori}.
\citet{losc:etal:2013} propose a DPM of skew-normal to estimate densities, obtaining promising results on some simulation scenarios.
However, to our knowledge, no theoretical properties have been proved so far for nonparametric skew-normal mixture models. In this paper, we will analyze the properties of location-scale-shape mixture models using the skew-normal kernel by showing large support of the prior and proving strong posterior consistency. We also introduce a new model for probability mass function estimation based on rounded skew-normal kernels.
In addition, we propose efficient sampling algorithms, which exploit recent advances in Bayesian inference for the skew-normal model \citep{cana:scar:2013}, both for the continuous and count cases.

The rest of the paper is organized as follows. Section 2 reviews the skew-normal distribution and formalizes location-scale-shape mixture models. Section 3 discusses the asymptotic properties of the DPM of skew-normals prior. Section 4 gives the posterior full conditional distributions representation from which a Gibbs sampling algorithm can be obtained. In Section 5 some simulation studies are carried out to show the performance of the methods in finite samples. Section 6 provides two applications and Section 7 concludes the paper. 

\section{Models}

\subsection{The skew-normal distribution}

A random variable $X$ is distributed as a skew-normal \citep{art:azza:1985} with location $\xi$, scale $\omega$ and shape $\lambda$, written $X \sim SN(\xi, \omega,\lambda)$, if its density function  is
\begin{equation}
\label{eq:sn}
	f_{SN}(X;\xi, \omega,\lambda) = \frac{2}{\omega} \phi\left(\frac{x-\xi}{\omega}\right)  \Phi\left( \lambda \frac{x-\xi}{\omega} \right),
\end{equation}
where $\phi(x)$ and $\Phi(\cdot)$ are the density function and the distribution function, respectively, of a standard normal,
$\xi \in \R$, $\omega \in \R^+$ and $\lambda \in \R$. Note that for $\lambda = 0$ the density reduces to the normal $N(x;\xi,\omega^2)$. Let $F_{SN}(x;\xi, \omega,\lambda)$ be the correspondent cumulative distribution function.

The skew-normal model has several stochastic representations \citep[see, e.g.,][]{azza:2014}. Some of them are interesting since they mimic real life phenomena, and others are convenient because of their nice mathematical construction. An elegant and useful stochastic representation, 
for example, is obtained via convolution. If $Z\sim  N(0,1)$ and $V \sim N(0,1)$, and $\delta \in (-1, 1)$, then 
\begin{equation}\label{eq:stocastica}
	X = \delta |Z| + \sqrt{1-\delta^2} V
\end{equation}
has a skew-normal distribution $X \sim SN(0, 1,\delta/\sqrt{1-\delta^2})$. The latter representation is particularly useful if we want to simulate skew-normal random variable and, after suitable adaptation, it will be used in the Gibbs sampling algorithm of Section~\ref{sec:gibbs}.

\subsection{Mixtures of skew-normals}
\label{sec:dpmsn}

Assume $y$ a continuous random variable, $y \sim f$ and $f \in \cal L$ where $\cal L$ is the space of densities with respect to the Lebesgue measure. 
A prior on $\cal L$, is a DPM of skew-normal  if
\begin{equation}
 f(y)  =  \sum_{h=1}^\infty \pi_h f_{SN}(y; \xi_h, \omega_h, \lambda_h)
\label{eq:dpmsn}
\end{equation}
with $\pi \sim \mbox{Stick}(\alpha)$, and $(\xi_h, \omega_h, \lambda_h ) \stackrel{iid}{\sim} P_0$. To conclude the prior specification,  we may assign gamma hyperprior to $\alpha$ as suggested by  \citet{art:esco:west:1995}. Namely $\alpha \sim \text{Ga}(a_\alpha, b_\alpha)$ with $a_\alpha$ and $b_\alpha$ small in order to have a distribution with heavy tails and favoring smaller values of $\alpha$. To denote a general density from the mixture model \eqref{eq:dpmsn} we use the notation $f_{MSN}$.

The choice of $P_0$ is very important both from the applied and theoretical point of view. $P_0$ is a measure over $\R \times \R^+ \times \R$ and needs to be specified.
 In mixture of Gaussians models the usual choice for $P_0$ is normal-inverse-gamma for gaining conjugacy in the blocked Gibbs samplers.  In specifying $P_0$ here, we want to retain computational tractability while having the possibility to include, if present, prior information. A recent proposal for the Bayesian analysis of the skew-normal model has been discussed  by \citet{cana:scar:2013}, 
showing that the prior 
\begin{equation}
	P_0(\xi,\omega,\lambda) = N(\xi; \xi_0, \kappa \omega^2) \times \text{Ga}(\omega^{-2};a, b) \times N(\lambda;0,\psi_0),
	\label{eq:basemeasure}
\end{equation}
leads to closed form full conditional posterior distributions whose sampling can be efficiently carried out within a Gibbs sampling scheme. See Section \ref{sec:gibbs} for further details.
Note that the marginal prior for $\lambda$ is a normal centered in zero with variance $\psi_0$. This implies that the prior expected skewness for each mixture component is zero.
However, if we are motivated  by finding clustering patterns and we expect that  most cluster has positive (negative) skewness, the marginal prior for $\lambda$ can be generalized and assumed to be skew-normal with suitable parameters \citep{cana:scar:2013}.

\subsection{Mixture of rounded skew-normals}
\label{sec:dpmrsn}

Consider, now, the case in which $y \in \N$  to be a discrete or count random variable with $y \sim p$ and $p\in \cal C$ where $\cal C$ is the space of the probability mass functions on the integers.
Following \citet{cana:duns:2011}, assume that $y = h(y^*)$, where $h(\cdot)$ is a rounding function defined so that $h(y^*) = j$ if $y^* \in (a_j,a_{j+1}]$, for $j=0,1,\ldots, \infty$, with $a_0 < a_1 < \ldots $ an infinite sequence of pre-specified thresholds that defines a disjoint partition of $\R$ with $a_0 = -\infty$ and $a_\infty = \infty$. Under this setting the probability mass function $p$ of $y$ is $p = g(f)$, where $g(\cdot)$ is the rounding function having the simple form 
\begin{equation}
\label{eq:mapping.g}
	p(j) = g(f)[j] = \int_{a_j}^{a_{j+1}} f(y^*) \d y^* \, \, \, \, \, \,  j \in \N.
\end{equation}
A prior over $\cal C$ is obtained specifying a prior for the distribution of the latent $y^*$. Our proposal consists in 
\begin{equation}
 y = h(y^*), \quad y^* \sim f^*, \quad
 f^*(y)  =  \sum_{h=1}^\infty \pi_h f_{SN}(y^*; \xi_h, \omega_h, \lambda_h)
\label{eq:rsn}
\end{equation}
with $\pi \sim \mbox{Stick}(\alpha)$, and $(\xi_h, \omega_h, \lambda_h ) \sim P_0$ as in Section~\ref{sec:dpmsn}. We call this formulation DPM of rounded skew-normal.

Clearly, the properties of the prior induced on the space of probability mass functions, here described, will be largely driven by the properties of the prior on the latent space. 
In the next section we will study first some of the properties of model \eqref{eq:dpmsn} and then discuss the discrete case.

\section{Large support and posterior consistency}
\label{sec:properties}

An important property  that a Bayesian nonparametric procedure should hold is the  consistency in frequentist sense
of the final posterior,
 namely 
  if a fixed density $f_0$ has generated the data, the posterior should concentrates on a small neighborhood of such $f_0$ as the sample size increases.


We first concentrate on the asymptotic properties of model \eqref{eq:dpmsn}. 
Large support of the prior is an important property while also having a crucial role in posterior consistency.  The Kullback-Leibler (KL) support of the prior $\Pi$ is the set of all $f_0$ such that $\Pi(\mathcal{K}_\epsilon(f_0) ) > 0$, where $\mathcal{K}_\epsilon(f_0)$ is a KL $\epsilon$-neighborhood of $f_0$. 
\citet{art:wu:ghos:2008} proved the prior positivity of Kullback-Leibler  $\epsilon$-neighborhoods
 under mild regularity conditions on $f_0$, for DP location-scale mixture of several kernels. Among them, the authors considered the skew-normal kernel too, assuming the shape parameter as fixed. Under the theory therein for each fixed $\lambda_0$ we have that the prior on the space of continuous univariate densities induced via 
\[
f(y; P, \lambda_0)  =  \int f_{SN}(y; \xi, \omega, \lambda_0) dG(\xi, \omega ), \,\,\,\, G \sim DP(\alpha G_0) 
\]
has large KL support. 

The next theorem, which instead is in terms of location-scale-shape mixtures prior formalizes the size of the KL support of prior \eqref{eq:dpmsn}. The proof is reported in the Appendix.
\begin{theorem}
\label{theo:support}
Let $f_0$ be a density over $\R$ with respect to Lebesgue measure and let $\Pi$ denote the prior on $f$ induced from a location-scale-shape mixture of skew-normal kernels, i.e.
\begin{equation}
f(x; P)  =  \int f_{SN}(x; \xi, \omega, \lambda) dP(\xi, \omega, \lambda ), \,\,\,\, P \sim \tilde{\Pi}.
\label{eq:lssmixture}
\end{equation}
Assume that the weak support of $\tilde{\Pi}$ contains all probability measures on $\R \times \R^{+} \times \R$ that are compactly supported and that: 
	$(i)$ $0 < f_0(x) < M$ for some finite constant $M$, 
	$(ii)$ $|\int f_{0}(x) \log f_0 (x) \d x| < \infty$, 
	($iii$) for some $a>0$, $\int f_0(x) \log \frac{f_0(x)}{\psi_a(x)} \d x < \infty$, where $\psi_a(x) = \inf_{ t \in (x-a, x+a)} f_0(t)$, and ($iv$) for some $\eta >0$, $\int |x|^{2(1+\eta)}f_0(x) \d x < \infty$. Then $f_0$ is in the KL support of $\tilde{\Pi}$.
\end{theorem}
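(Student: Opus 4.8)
The plan is to establish the KL support property directly, i.e. to show that for every $\epsilon > 0$ the set $\mathcal{K}_\epsilon(f_0) = \{ f : \kulleib{f_0}{f}{x} < \epsilon \}$ receives positive mass under the prior $\Pi$ induced by \eqref{eq:lssmixture}. I would follow the by-now-standard two-step scheme of \citet{ghos:etal:1999}: first an \emph{approximation} step exhibiting a single compactly supported mixing measure $P^*$ whose induced density $f_{P^*}$ lies within KL distance $\epsilon/2$ of $f_0$; then a \emph{prior positivity} step showing that a whole weak neighborhood of $P^*$ is mapped into $\mathcal{K}_\epsilon(f_0)$, so that the weak-support hypothesis on $\tilde\Pi$ forces $\tilde\Pi(\mathcal{K}_\epsilon(f_0)) > 0$.

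For the approximation step I would exploit the fact that $f_{SN}(x;\xi,\omega,0) = N(x;\xi,\omega^2)$: the location-scale-shape family literally contains the Gaussian location-scale family along the slice $\lambda = 0$. Conditions $(i)$--$(iv)$ are exactly the hypotheses under which a Gaussian location-scale mixture approximates $f_0$ in Kullback--Leibler divergence, as in \citet{art:wu:ghos:2008} specialized to fixed $\lambda_0 = 0$. I would therefore take $P^*$ concentrated on $\R \times \R^+ \times \{0\}$ and compactly supported---first convolving $f_0$ with a narrow Gaussian, then discretizing and truncating the resulting mixing measure in $(\xi,\omega)$---so that $\kulleib{f_0}{f_{P^*}}{x} < \epsilon/2$. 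Since such a $P^*$ is compactly supported in $\R \times \R^+ \times \R$, the weak-support assumption guarantees $\tilde\Pi$ charges every weak neighborhood of it.

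For the prior positivity step I would verify that $P \mapsto f_P$ is continuous with respect to weak convergence: the kernel $f_{SN}(x;\xi,\omega,\lambda)$ is bounded and jointly continuous in $(\xi,\omega,\lambda)$ on any compact set with $\omega$ bounded away from $0$, so $P_n \to P^*$ weakly implies $f_{P_n}(x) \to f_{P^*}(x)$ pointwise. The substance is to upgrade pointwise convergence to control of the KL functional, showing that $\kulleib{f_0}{f_{P}}{x}$ stays below $\epsilon$ for all $P$ in a sufficiently small weak neighborhood of $P^*$. Here conditions $(i)$--$(iv)$ do the work: $(i)$ and $(ii)$ control $\int f_0 \log f_0$; a uniform Gaussian-type lower envelope for $f_P$ on the compact parameter set, together with the moment bound $(iv)$, dominates the tail contribution of $\int f_0 \log(1/f_P)$; and condition $(iii)$ handles the region where $f_0$ dips, permitting a dominated-convergence argument. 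Combining the two steps gives $\tilde\Pi(\mathcal{K}_\epsilon(f_0)) \geq \tilde\Pi(\{P : f_P \in \mathcal{K}_\epsilon(f_0)\}) > 0$.

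I expect the prior positivity step to be the main obstacle, since KL neighborhoods are not open in the weak topology and the log-ratio $\log(f_0/f_P)$ must be dominated uniformly over the neighborhood---precisely the role of the moment and lower-envelope conditions. A subtlety genuinely specific to the skew-normal is that the extra factor $\Phi(\lambda(x-\xi)/\omega)$ vanishes in one tail, so $f_{SN}$ can decay faster than a Gaussian and a crude lower bound on $f_P$ could fail there. This is why I would anchor the approximating measure at $\lambda = 0$ and keep the neighborhood small: on the resulting compact parameter set the $\Phi$ factor is bounded below by a positive constant over the region that matters, so the skew-normal kernel is sandwiched between Gaussian-type envelopes and the tail estimates reduce to those of the classical Gaussian-mixture argument.
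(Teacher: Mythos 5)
Your proposal is correct, and it follows the same overall two-step scheme as the paper (approximation by a compactly supported mixing measure, then prior positivity over a weak neighborhood of that measure, so that the weak-support hypothesis forces $\tilde\Pi(\mathcal{K}_\epsilon(f_0))>0$), but it executes the approximation step by a genuinely different construction. The paper does \emph{not} anchor at $\lambda=0$: it takes $\d P_n(\xi,\omega,\lambda)=t_n \I_{\xi\in[-n,n]}f_0(\xi)\,\delta_{\omega_n}(\omega)\,\delta_{\lambda_n}(\lambda)$ with $\omega_n=n^{-\eta}\to 0$ and $\lambda_n=1+n^{-\beta}\to 1$, i.e.\ a sequence of genuinely skewed kernels, and proves $\int f_0(x)\log\{f_0(x)/f_n(x)\}\,\d x\to 0$ by dominated convergence, using the Owen's $T$ identity $\int_{-c}^{c}2\phi(t)\Phi(\lambda t)\,\d t=\int_{-c}^{c}\phi(t)\,\d t$ to show the skewed kernel concentrates around its location exactly like a Gaussian, together with skew-normal-specific upper and lower envelopes for the dominating function. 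You instead place $P^*$ on the slice $\lambda=0$, where $f_{SN}(x;\xi,\omega,0)=N(x;\xi,\omega^2)$, and import the Gaussian approximation result of \citet{art:wu:ghos:2008} and \citet{tokd:2006}, whose hypotheses are precisely $(i)$--$(iv)$; all skew-normal-specific work is thereby confined to the prior-positivity step, which in the paper is Lemma~2 (the analogue of Lemma~3.1 of \citealp{tokd:2006}). Your route is more economical and cleanly exploits that the skew-normal family nests the Gaussian one; the paper's is self-contained and shows in addition that the approximation can be carried out with shape bounded away from zero, consistent with its Lemma~2, whose statement assumes the mixing measure charges $(0,s)$ in the shape coordinate (its proof, which works with $(-s,s)$, covers your $\lambda=0$ anchor as well).

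One imprecision to flag, though it is not a gap in the plan: a weak neighborhood of $P^*$ cannot force the mass of $P$ to lie near $\lambda=0$, and even for $|\lambda|\le s$ with $s$ small the factor $\Phi(\lambda(x-\xi)/\omega)$ is \emph{not} bounded below by a positive constant uniformly in $x$; it vanishes in one tail whenever $\lambda\neq 0$. What actually saves the tail estimate --- and what your ``sandwich'' sentence correctly captures --- is that $-\log\Phi\bigl(-s(|x|+a)/\omega_l\bigr)$ grows only quadratically in $|x|$, so the lower envelope of the kernel over the compact parameter set is a product of two Gaussian-type factors whose negative logarithm is $O(|x|^2)$, which condition $(iv)$ dominates; the ``bounded below by a positive constant'' phrasing is accurate only on compact $x$-sets.
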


The conditions on $f_0$ required by Theorem~\ref{theo:support} are the same conditions for the KL support of general location-scale mixtures and can be seen as standard regularity and tail conditions. 
As a corollary of Theorem~1, we give the following result which formalizes the size of the support of the prior \eqref{eq:rsn}.
The proof follows directly from Theorem 1 of \citet{cana:duns:2011} and hence is omitted.

\begin{corollary}
Let $p_0$ be a probability mass function on $\N$ such that $p_0 \in g(\mathcal{L}_{\Pi^*})$ where $g$ is the mapping function in \eqref{eq:mapping.g}, $\Pi^*$ a prior defined as in \eqref{eq:lssmixture} and $\mathcal{L}_{\Pi^*}$ is the KL support of $\Pi^*$. Say $\Pi$ the prior induced by $\Pi^*$ as described in Section~\ref{sec:dpmrsn}, then $p$ is in the KL support of $\Pi$.
\label{cor:count}
\end{corollary}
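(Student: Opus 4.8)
The plan is to treat $\Pi$ as the pushforward of $\Pi^*$ through the rounding map $g$ of \eqref{eq:mapping.g}, and thereby transport the positivity statement back to the latent space $\mathcal{L}$ where Theorem~\ref{theo:support} is already available.

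First I would unwind the definitions. Since $p_0 \in g(\mathcal{L}_{\Pi^*})$, write $p_0 = g(f_0)$ for some $f_0 \in \mathcal{L}_{\Pi^*}$. To prove that $p_0$ lies in the KL support of $\Pi$ I must show that, for every $\epsilon > 0$,
\[
\Pi\left( \left\{ p \in \mathcal{C} : \sum_{j \in \N} p_0(j) \log\frac{p_0(j)}{p(j)} < \epsilon \right\} \right) > 0 .
\]
Because a draw from $\Pi$ is obtained by sampling $f \sim \Pi^*$ and setting $p = g(f)$, the measure $\Pi$ is the image of $\Pi^*$ under $g$, so this prior mass equals $\Pi^*\bigl( \{ f \in \mathcal{L} : \sum_{j} p_0(j) \log( p_0(j)/g(f)[j] ) < \epsilon \} \bigr)$. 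It therefore suffices to control the discrete KL divergence between $g(f_0)$ and $g(f)$ by its continuous counterpart.

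The key step, which is exactly Theorem 1 of \citet{cana:duns:2011}, is the data-processing (coarsening) inequality for $g$: for any $f_0, f \in \mathcal{L}$,
\[
\sum_{j \in \N} g(f_0)[j] \, \log\frac{g(f_0)[j]}{g(f)[j]} \;\leq\; \kulleib{f_0}{f}{y} .
\]
Granting this bound, the continuous KL neighborhood is contained in the event governing the discrete divergence, i.e. $\mathcal{K}_\epsilon(f_0) \subseteq \{ f : \sum_{j} p_0(j) \log( p_0(j)/g(f)[j] ) < \epsilon \}$. By monotonicity of $\Pi^*$ together with the hypothesis $f_0 \in \mathcal{L}_{\Pi^*}$, the smaller event $\mathcal{K}_\epsilon(f_0)$ already carries positive $\Pi^*$-mass for every $\epsilon > 0$; hence so does the larger one, which gives $\Pi(\mathcal{K}_\epsilon(p_0)) > 0$ for all $\epsilon$ and establishes the claim.

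The only substantive obstacle is the coarsening inequality itself — that recording only the partition cell $(a_j, a_{j+1}]$ into which $y^*$ falls can never increase the KL divergence. This is a cell-by-cell application of the log-sum inequality (equivalently, convexity of $t \mapsto t \log t$), and since it is established in full generality by \citet{cana:duns:2011}, the remainder reduces to the routine containment-of-events and monotonicity argument above, which is why the detailed proof is omitted here.
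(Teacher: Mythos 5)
Your proof is correct and takes essentially the same approach as the paper: the paper's proof of this corollary simply defers to Theorem~1 of \citet{cana:duns:2011}, whose argument is exactly the pushforward-plus-coarsening-inequality reasoning you reconstruct (the discrete KL divergence between $g(f_0)$ and $g(f)$ is dominated by the continuous KL divergence via the log-sum inequality applied cell by cell, so the latent KL neighborhood of $f_0$ is contained in the $g$-preimage of the discrete KL neighborhood of $p_0$, and positivity transfers). You have merely written out the details that the paper omits by citation.
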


Weak posterior consistency is a direct consequence of the KL condition on the prior thanks to the theory of \citet{art:schw:1965}. This means that as the sample size increases the posterior probability of any weak neighborhood around the true data-generating distribution $f_0$ converges to one with $P_{f_0}$-probability~1. 
However, strong posterior consistency is more interesting.
Weak consistency implies strong consistency in the discrete probability mass function case \citep[see Theorem~2 of][]{cana:duns:2011} and hence, for the mixtures discussed in \ref{sec:dpmrsn},  Corollary~\ref{cor:count} is sufficient for strong posterior consistency too. 

To prove strong consistency for the mixture \eqref{eq:dpmsn}, we need some further conditions on the prior. Let first $J(\delta, \mathcal{L})$ denote the $L_1$ metric entropy of the set $\cal L$, defined as the logarithm of $N(\delta,{\cal L})$,  the minimum integer $N$ for which there exists $f_1,\ldots,f_N\in {\cal L}$ such that ${\cal L}\subset\bigcup_{j=1}^N\{f:\, ||f - f_j||_1 <\delta\}$.
To obtain strong posterior consistency we need to define a sieve, i.e., a sequence of sets which eventually grows to cover the whole parameter space satisfying the requirements of Theorem 8 of \citet{ghos:etal:1999}. That Theorem basically requires that such a sieve has low entropy and high prior mass. To construct our sieve we exploit the stick-breaking representation of the Dirichlet process following an approach first proposed by \citet{pati:duns:tokd:2011} and adapting it to the more challenging case of skew-normal kernels. To build our sieve we first introduce the set
\begin{equation}
	\mathcal{F}_{a,u,l,s,m} = \left\{ f_{MSN}:|\xi_h|< a, l < \omega_h < u, |\lambda_h|< s, \mbox{for $h =1, \dots, m$}, \sum_{h>m} \pi_h<\epsilon  \right\}
\label{eq:sieve}
\end{equation}
and, in the following, we formalizes its size in terms of metric entropy $J(\delta,  \mathcal{F}_{a,u,l,s,m})$. 

\begin{lemma}
For some $a>0$, $u>l>0$, and $s>0$, the set $\mathcal{F}_{a,u,l,s,m}$ of \eqref{eq:sieve} has
\[
	J(\epsilon,\mathcal{F}_{a,u,l,s,m}) \leq m \log\left\{ d_1 \left(\frac{as}{l}\right) + d_2 \left(\frac{a}{l}\right) + d_3 s \log\left(\frac{u}{l}\right) + d_4 \log\left(\frac{u}{l}\right)  + s +1\right\} + d_3 m \log(d_4 m)	
\]
where $d_1$, $d_2$, $d_3$, and $d_4$ are constants depending on $\epsilon$.
\label{theo:entropy}
\end{lemma}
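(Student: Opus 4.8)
The plan is to build an explicit $\epsilon$-net of $\mathcal{F}_{a,u,l,s,m}$ in the $L_1$ metric and count it, adapting the stick-breaking entropy argument of \citet{pati:duns:tokd:2011} to the skew-normal kernel. The backbone is the decomposition, valid for $f=\sum_h\pi_h f_{SN}(\cdot;\xi_h,\omega_h,\lambda_h)$ and an approximant $\hat f=\sum_h\hat\pi_h f_{SN}(\cdot;\hat\xi_h,\hat\omega_h,\hat\lambda_h)$,
\[
\|f-\hat f\|_1 \le \sum_{h\le m}|\pi_h-\hat\pi_h| + \sum_{h\le m}\pi_h\,\big\|f_{SN}(\cdot;\xi_h,\omega_h,\lambda_h)-f_{SN}(\cdot;\hat\xi_h,\hat\omega_h,\hat\lambda_h)\big\|_1 + 2\sum_{h>m}\pi_h.
\]
Since every $f_{MSN}\in\mathcal{F}_{a,u,l,s,m}$ has $\sum_{h>m}\pi_h<\epsilon$, the tail term is already $O(\epsilon)$; it therefore suffices to discretize the weights $(\pi_1,\dots,\pi_m)$ and the first $m$ parameter triples finely enough that the two remaining sums are also $O(\epsilon)$.

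The crucial ingredient is a set of $L_1$ Lipschitz bounds for a single kernel in each of its three arguments. I would obtain these by differentiating the density in \eqref{eq:sn} and passing to the variable $z=(x-\xi)/\omega$, so that the prefactor $\omega^{-1}$ and the Jacobian cancel, and then bounding the resulting integrals with $\phi\le\phi(0)$, $\Phi\le1$ and the finiteness of $\int z^2\phi(z)\,\d z$ and $\int|z|\phi(z)\,\d z$. The new feature relative to the Gaussian case is the factor $\Phi(\lambda z)$: differentiating it yields a term $\phi(\lambda z)$ weighted by $\lambda$, so that perturbing location or scale costs an extra factor of order $(1+|\lambda|)\le(1+s)$, whereas perturbing the shape costs, after the change of variables, only $O(|\lambda-\hat\lambda|)$ uniformly in $\xi$ and $\omega$. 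Concretely I expect bounds of the form $(1+s)|\xi-\hat\xi|/l$ for location, $(1+s)|\log\omega-\log\hat\omega|$ for scale, and $|\lambda-\hat\lambda|$ for shape.

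With these Lipschitz constants the counting is routine. I would grid $(-a,a)$ for $\xi$ with mesh proportional to $\epsilon l/(1+s)$, giving $N_\xi$ of order $a(1+s)/(\epsilon l)$ and hence the $d_1(as/l)$ and $d_2(a/l)$ terms; grid the scale geometrically, that is equispaced in $\log\omega$ over $(l,u)$ with log-mesh proportional to $\epsilon/(1+s)$, giving $N_\omega$ of order $(1+s)\log(u/l)/\epsilon$ and hence the $d_3 s\log(u/l)$ and $d_4\log(u/l)$ terms; and grid $(-s,s)$ for $\lambda$ with mesh proportional to $\epsilon$, giving $N_\lambda$ of order $s/\epsilon$, the $s+1$ contribution. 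Combining the per-coordinate counts for one kernel and allowing independent choices across the $m$ leading components produces the factor $m\log\{\cdots\}$, while discretizing the weight vector on a mesh proportional to $\epsilon/m$ over the $m$-simplex contributes of order $(m/\epsilon)^m$ configurations, i.e.\ the additive term $d_3 m\log(d_4 m)$. Taking logarithms of the total cardinality and absorbing the $\epsilon$-dependence into $d_1,\dots,d_4$ yields the stated inequality.

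The main obstacle, and the only genuine departure from the Gaussian template, is the shape parameter. I must check that the perturbation of $\Phi(\lambda z)$ is controlled uniformly over the whole real line and over the compact parameter ranges, so that the $(1+s)$ factors it injects into the location and scale bounds are the correct and only price paid; this coupling is precisely what creates the cross terms $as/l$ and $s\log(u/l)$ that do not appear in the pure location-scale calculation. I would also take care that the geometric grid on $\omega$ is what converts the scale contribution into $\log(u/l)$ rather than a power of $u/l$, and that the kernel differences are dominated by integrable functions so that every tail integral entering the Lipschitz bounds is finite; these are the points where the estimates must be handled most carefully.
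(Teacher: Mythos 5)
Your overall architecture (split the $L_1$ distance into weights, leading kernels, and tail; cover the parameter set of a single kernel and the weight simplex separately) matches the paper's, but your covering argument cannot yield the stated bound; it yields a strictly weaker one. The bracket in the lemma is a \emph{sum}, $d_1(as/l)+d_2(a/l)+d_3 s\log(u/l)+d_4\log(u/l)+s+1$, equivalently a product of only two factors, $(c_1 s+1)\left(c_2\,a/l+c_3\log(u/l)+1\right)$. Your construction is a Cartesian grid: a uniform location mesh proportional to $\epsilon l/(1+s)$, an independent geometric scale grid, and a shape grid. Its cardinality is the \emph{product} $N_\xi N_\omega N_\lambda \asymp \frac{a(1+s)}{l}\cdot(1+s)\log(u/l)\cdot s$ up to $\epsilon$-constants, and the logarithm of a product is the sum of the logs, not the log of the sum. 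Taking $a/l=s=\log(u/l)=T$, your count grows like $T^5$ while the lemma's bracket grows like $T^2$, and the additive term $d_3 m\log(d_4 m)$ cannot absorb the difference since it does not depend on $a,l,u,s$; no choice of $\epsilon$-dependent constants repairs this. The missing idea is the paper's coupling of the location mesh to the scale slab: inside the slab $\omega\in(\sigma_{j-1},\sigma_j]$ with $\sigma_j=l(1+\zeta)^j$, locations are gridded with mesh proportional to $\epsilon\sigma_{j-1}$, so the per-slab location counts $a/(\epsilon\sigma_{j-1})$ form a geometric series summing to $O\big(a/(l\epsilon)\big)$ over all slabs. Location and scale then contribute \emph{additively}, $c_2\,a/l+c_3\log(u/l)$, and only the shape count multiplies this.

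A second, compounding problem is that your claimed Lipschitz constants for the location and scale perturbations carry a spurious factor $(1+s)$, which injects further powers of $s$ into the count, whereas the lemma's bracket is linear in $s$. That factor is an artifact of bounding $\phi(\lambda z)\le\phi(0)$: in fact $\int \lambda\phi(z)\phi(\lambda z)\,\d z=\lambda/\sqrt{2\pi(1+\lambda^2)}\le 1/\sqrt{2\pi}$, so even the derivative route gives a location bound that is uniform in $\lambda$. The paper avoids the issue entirely: for the location shift it compares with a half-normal and gets $\|\cdot\|_1\le 2\sqrt{2/\pi}\,|\theta_1-\theta_2|/\omega_2$, and for the scale shift it uses Csisz\'ar's inequality together with an exact skew-normal KL computation in which the $\Phi(\lambda\,\cdot)$ terms are dominated, both bounds being free of $\lambda$. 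Your treatment of the shape parameter is correct (the cost of a shape perturbation is $O(|\lambda_1-\lambda_2|)$ uniformly in the other parameters, which the paper establishes via Owen's $T$ function as $\frac{4}{\pi}\arctan(\eta/2)$, so $O(s)$ shape cells with an $\epsilon$-dependent mesh suffice), and your weight-simplex discretization delivers the $d_3 m\log(d_4 m)$ term just as the paper's stick-breaking discretization does. But without the slab-coupled location grid and the $\lambda$-uniform location/scale bounds, the inequality as stated does not follow.
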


To conclude this section we give our main result on consistency for the model \eqref{eq:dpmsn} with base measure \eqref{eq:basemeasure} which combines Theorem 8 of \citet{ghos:etal:1999} and Lemma~\ref{theo:entropy}.

\begin{theorem}
Assume we observe an iid sample $y = (y_1 , \dots , y_n)$ from $f_0$
satisfying the conditions of Theorem~\ref{theo:support}. For any $\epsilon >0$, if $\Pi$ is the the prior defined by \eqref{eq:dpmsn}--\eqref{eq:basemeasure}, then the posterior $\Pi(\{f:||f - f_0||_1 <\epsilon\}\mid y_1, \dots, y_n) \to 1$  a.s. $P_{f_0}$.
\label{theo:final}
\end{theorem}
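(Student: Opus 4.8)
The plan is to apply the strong-consistency criterion of Theorem~8 of \citet{ghos:etal:1999}, which reduces $L_1$ posterior consistency at $f_0$ to two ingredients: (a) the Kullback--Leibler/Schwartz condition, i.e. $f_0$ lies in the KL support of $\Pi$; and (b) for every $\epsilon>0$ the existence of a sieve $\mathcal{F}_n\subset\mathcal{L}$ whose complement receives exponentially small prior mass, $\Pi(\mathcal{F}_n^c)\le c_1 e^{-c_2 n}$, and whose $L_1$ metric entropy is negligible, $J(\delta,\mathcal{F}_n)=o(n)$ for some $\delta<\epsilon$. Ingredient (a) is precisely what Theorem~\ref{theo:support} delivers, and ingredient (b) is what Lemma~\ref{theo:entropy} is built to feed, so the proof reduces to choosing the sieve parameters so that both halves of (b) hold simultaneously.

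For (a) I would check the hypotheses of Theorem~\ref{theo:support} for the present prior. Here the mixing-measure prior is the Dirichlet process $DP(\alpha P_0)$ with base measure $P_0$ given by \eqref{eq:basemeasure}; since the normal, gamma and normal factors each have full support, $P_0$ charges all of $\R\times\R^+\times\R$, so the weak support of the DP is the entire space of probability measures on $\R\times\R^+\times\R$ and in particular contains every compactly supported one. As $f_0$ is assumed to satisfy conditions $(i)$--$(iv)$, Theorem~\ref{theo:support} then places $f_0$ in the KL support of $\Pi$, establishing (a).

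For (b) I would take $\mathcal{F}_n=\mathcal{F}_{a_n,u_n,l_n,s_n,m_n}$ from \eqref{eq:sieve}, letting all five parameters grow with $n$, and tune the rates. The entropy half follows by substituting the rates into Lemma~\ref{theo:entropy}: choosing the box parameters to grow at most polynomially (so that each factor $\log(u_n/l_n)$, $\log(a_n s_n/l_n)$, etc. is $O(\log n)$) and the truncation level to satisfy $m_n=o(n/\log n)$ forces $J(\delta,\mathcal{F}_n)\le m_n\cdot O(\log n)+O(m_n\log m_n)=o(n)$. The prior-mass half follows from a union bound over the first $m_n$ components plus the stick-breaking tail, $\Pi(\mathcal{F}_n^c)\le m_n\{P_0(|\xi|\ge a_n)+P_0(\omega\notin[l_n,u_n])+P_0(|\lambda|\ge s_n)\}+\Pr(\sum_{h>m_n}\pi_h\ge\epsilon)$. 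Under \eqref{eq:basemeasure} the location and shape tails are Gaussian, so $a_n,s_n\asymp\sqrt{n}$ make $P_0(|\xi|\ge a_n)$ and $P_0(|\lambda|\ge s_n)$ of order $e^{-cn}$; the gamma prior on $\omega^{-2}$ gives an exponentially small lower-scale tail $P_0(\omega<l_n)$ for $l_n\asymp n^{-1/2}$, while the upper-scale tail $P_0(\omega>u_n)$ decays only polynomially and must be driven down by letting $u_n$ grow as a power of $n$; finally the stick-breaking weights obey $\Pr(\sum_{h>m_n}\pi_h\ge\epsilon)\le \epsilon^{-1}\{\alpha/(1+\alpha)\}^{m_n}$, exponentially small in $m_n$.

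The crux, and the step I expect to be most delicate, is reconciling the two halves of (b). The stick-breaking tail pushes $m_n$ to grow as fast as possible, ideally $\propto n$ to make $\Pi(\mathcal{F}_n^c)$ genuinely $e^{-cn}$, whereas the entropy bound of Lemma~\ref{theo:entropy} — which for the skew-normal carries the extra shape-driven terms $a_n s_n/l_n$ and $s_n\log(u_n/l_n)$ absent in the Gaussian case — forces $m_n$ to grow strictly slower than $n/\log n$; likewise the polynomial upper-scale tail wants $u_n$ large while the entropy wants $u_n/l_n$ moderate. I would resolve this by matching rates so that $\Pi(\mathcal{F}_n^c)$ is at least summable (which suffices for the almost-sure conclusion via the Borel--Cantelli step inside the argument of \citet{ghos:etal:1999}) while $J(\delta,\mathcal{F}_n)=o(n)$; concretely, polynomial box parameters together with $m_n\asymp n/(\log n)^2$ keep the new shape terms under control and leave headroom in both inequalities. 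With (a) and (b) verified, Theorem~8 of \citet{ghos:etal:1999} yields $\Pi(\{f:||f-f_0||_1<\epsilon\}\mid y_1,\dots,y_n)\to1$ a.s. $P_{f_0}$, which is the assertion.
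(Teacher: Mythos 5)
Your skeleton is the same as the paper's: it too invokes Theorem 8 of \citet{ghos:etal:1999}, takes the KL condition from Theorem~\ref{theo:support}, uses the sieve \eqref{eq:sieve} with Lemma~\ref{theo:entropy} for the entropy half, and (implicitly) a union bound plus tail bounds on $P_0$ for the prior-mass half; the paper's rates are $a=O(\sqrt{n})$, $s=O(\sqrt{n})$, $l=O(1/\sqrt{n})$, $u=O(e^{n})$, $m=O(n/\log n)$. The genuine gap in your version is the claim that it suffices to make $\Pi(\mathcal{F}_n^c)$ merely \emph{summable}. Theorem 8 requires $\Pi(\mathcal{F}_n^c)\le c_1e^{-c_2n}$, and the exponential rate is not a convenience that Borel--Cantelli lets you relax: under the KL condition the denominator of the posterior, $\int\prod_{i=1}^n f(y_i)/f_0(y_i)\,d\Pi(f)$, is only guaranteed to exceed $e^{-n\beta}$ eventually a.s.\ (for each fixed $\beta>0$), while the numerator restricted to $\mathcal{F}_n^c$ has $P_{f_0}$-expectation exactly $\Pi(\mathcal{F}_n^c)$. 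The Markov--Borel--Cantelli step therefore needs $\sum_n e^{n\beta}\Pi(\mathcal{F}_n^c)<\infty$ for some $\beta>0$, i.e.\ essentially exponential decay; with $\Pi(\mathcal{F}_n^c)\asymp n^{-2}$ the numerator is only polynomially small and can never beat the exponentially small denominator. Consequently the concrete rates you build on this principle (polynomial $u_n$, $m_n\asymp n/(\log n)^2$) do not satisfy the hypotheses of the theorem you are invoking, and the proof does not close.

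The ``tension'' that drove you to this weakening is an artifact of two avoidably crude estimates. First, your bound on the stick-breaking tail is too weak: for DP stick-breaking one has the exact identity $\sum_{h>m}\pi_h=\prod_{h\le m}(1-V_h)$ with $1-V_h\sim\text{Be}(\alpha,1)$, so $-\log\sum_{h>m}\pi_h\sim\text{Ga}(m,\alpha)$ and $\Pr\big(\sum_{h>m}\pi_h\ge\epsilon\big)=\Pr\big(\text{Ga}(m,\alpha)\le\log(1/\epsilon)\big)\le\{\alpha\log(1/\epsilon)\}^m/m!$, which decays like $e^{-m\log m}$ rather than $e^{-cm}$; with the paper's $m=O(n/\log n)$ this is already $e^{-cn}$, genuinely exponential in $n$, so there is no need to push $m\propto n$. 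Second, the entropy does not force $u_n$ to stay polynomial: $u$ enters the bound of Lemma~\ref{theo:entropy} only through $\log(u/l)$ \emph{inside} the outer logarithm, so even $u=e^{n}$ contributes $O(\log n)$ per component (total $O(n)$, below $\beta n$ for a small enough constant in $m$), while it is exactly what makes $P_0(\omega>u)\asymp u^{-2a}=e^{-2an}$ exponential under the gamma prior on $\omega^{-2}$; your polynomial $u_n$ leaves that tail polynomially small. Finally, one caution that cuts against your proposal (and, in fairness, against the paper's ``obvious''): under \eqref{eq:basemeasure}, $\xi\mid\omega\sim N(\xi_0,\kappa\omega^2)$ with $\omega^{-2}$ gamma, so marginally $\xi$ is a scaled Student-$t$ and $P_0(|\xi|\ge\sqrt{n})$ is polynomial, not $e^{-cn}$ as your ``Gaussian location tail'' assertion would need; the location tail must be handled jointly with the scale (e.g., by constraining $|\xi-\xi_0|/\omega$ in the sieve), a point neither your argument nor the paper's sketch addresses.
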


\begin{proof}
First define the set $\mathcal{F}_{n}$ as the set in \eqref{eq:sieve} with $a = O(\sqrt{n})$, $s = O(\sqrt{n})$, $l = O(1/\sqrt{n})$, $u=O(\exp\{n\})$, and $m = O(n/\log(n))$. Then, the proof relies on showing that  $\mathcal{F}_{n}$ satisfies the conditions of Theorem 8 of \citet{ghos:etal:1999}. This is obvious from the definition of $P_0$ in \eqref{eq:basemeasure} and our Lemma \ref{theo:entropy}. 
\end{proof}

\section{Computation}
\label{sec:gibbs}

A Gibbs sampler for the mixture of skew-normals can be developed generalizing the blocked Gibbs sampler of \citet{ishw:jame:2001}. We introduce latent $S_1, \dots, S_n$ where $S_i = h$ if the $i$-th subject is drawn from the $h$-th mixture component. With such an approach, conditionally on $S_i$, each observation is drawn from a single skew-normal distribution and hence the updated of each cluster-specific set of parameters can be done easily. To this end, using the stochastic representation  (\ref{eq:stocastica}), we also introduce latent half-normal distributed variables $\eta_1, \dots, \eta_n$. Conditionally on those variables the observations can be seen as drawn from a suitable Gaussian distribution and this allows us to gain conjugacy for the location and scale parameters of each component of the mixture. 

Finally, the distributions for the shape parameters are in closed forms and belong to the unified-skew-normal class of distribution \citep[discussed in][with the acronym SUN]{arel:azza:2006} as discussed in \citet{cana:scar:2013}. The precision parameter $\alpha$ can be updated as in  \citet{art:esco:west:1995}. 
The complete Gibbs sampler for model \eqref{eq:dpmsn} is reported in Algorithm~1.

\begin{algorithm}                      
\caption{Gibbs sampling for posterior simulation of model (4)}
\label{alg1}                           
\begin{enumerate}     
	\item Sample $S_i$, the class indicator from the multinomial 
	\[	\text{Pr}(S_i = h | -) = \frac{\pi_h  f_{SN}(y_i| \xi_h, \omega_h, \lambda_h) }{\sum_{l=1}^H \pi_l  f_{SN}(y_i| \xi_l, \omega_l, \lambda_l) }
		\]
		with $h = 1, \dots, H$ and $H$ the number of occupied clusters.
	\item Sample $\alpha$ using \citet{art:esco:west:1995} given $n$ and $H$, the number of occupied clusters
	\item Update the stick-breaking weights using
	\[
		 V_h \sim \text{Be}\left(1 + n_h,  \alpha + \sum_{l=h+1}^H n_l \right)
	\]
	where $n_h$ is the sample size of the $h$th cluster.
	\item Update 
	\[	\eta_i \sim N ( \delta_{S_i} (y_i^* - \xi_{S_i}), \omega_{S_i}^2 (1-\delta_{S_i}^2)) \]
	where $\delta_h$ is $\lambda_h/\sqrt{\lambda_h^2+1}$.
	\item Sample $(\xi_h,\omega_h)$ from
	\[ 
		N\left(\hat{\mu}_h, \hat{\kappa}_h \omega_h^2  \right) \text{InvGam}(a + n_h/2 + 1, b + \hat{b}_h)
	\]
	where
	\begin{align*}
		\hat{\mu}_h & = \frac{\kappa \sum_{S_i = h} ( y_i - \delta_h \eta_i) + (1-\delta_h^2)\xi_0 }
				{n_h + \kappa \omega^2 (1-\delta_h^2) } \\
		\hat{\kappa}_h & = \frac{\kappa(1-\delta_h^2)}{n_h\kappa + (1-\delta_h^2)} \\
		\hat{b}_h & = \frac{1}{2(1-\delta^2_h)} \left\{ \sum_{S_i = h} \eta_i^2 -2\delta_h \sum_{S_i = h} \eta_i(y_i - \xi_h) + \sum_{S_i = h} (y_i - \xi_h)^2  + (1-\delta^2_h)(\xi_h - \xi_0)^2 \right\}.
	\end{align*}
	\item Sample $\lambda_h$ from 
	\begin{equation}
	\label{eq:sun}
		\lambda_h \sim SUN_{1,n_h}(\lambda_h;0,0,\Delta_h,\Gamma_h)
	\end{equation}
	where $\Delta_h = [\delta_i]_{i=1,\dots,n_h}$ with $\delta_i = \psi_0 y_i (\psi_0^2 y_i^2 + 1)^{-1/2}$,  and 	$\Gamma_h  = I - D(\Delta_h)^2 + \Delta_h \Delta_h^T$, where $D(V)$ is a diagonal matrix whose elements coincide with those of the vector $V$.
\end{enumerate}
\end{algorithm}

For posterior computation in the discrete case, an additional data augmentation step and a modification of step 1 are required. Indeed we first need to generate the latent continuous variable $y^*$ and then we can continue on the line of the Gibbs sampler for the continuous case. Algorithm~2 gives the Gibbs sampler for model \eqref{eq:rsn}.

\begin{algorithm}
\caption{Gibbs sampling for posterior simulation of model (4)}
\label{alg2}
\begin{itemize}
	\item[0] For $i = 1, \dots, n$, generate $y_i^*$ from the full conditional posterior	
		\begin{itemize}
		\item[$0a$] Generate $u_i \sim U \Big(F_{SN}( a_{y_i} ; \xi_{S_i}, \omega_{S_i}, \lambda_{S_i} ), F_{SN}( a_{y_i+1} ; \xi_{S_i}, \omega_{S_i}, \lambda_{S_i} )\Big)$ 
		\item[$0b$] Let $y^*_i = F_{SN}^{-1}(u_i; \xi_{S_i}, \omega_{S_i}, \lambda_{S_i} )$ 
		\end{itemize}
	\item[1b] Sample $S_i$, the class indicator from the multinomial 
	\[	\text{Pr}(S_i = h | -) = \frac{\pi_h  p(y_i| \xi_h, \omega_h, \lambda_h) }{\sum_{l=1}^H \pi_l  p(y_i| \xi_l, \omega_l, \lambda_l) }
		\]
		with $h = 1, \dots, H$ and $H$ the number of occupied clusters.
	\item[2b] Continue with the Gibbs sampler for the continuous case (Algorithm 1) with $y_i^*$ in place of $y_i$;
\end{itemize} 
\end{algorithm}


\section{Simulation studies}
\label{sec:simulation}
\subsection{Density estimation}
\label{sec:densityestimation}

To assess the performance of the proposed approach, we conducted a simulation study comparing our location-scale-shape mixture of skew-normal with a classic location-scale mixture of Gaussians. Several simulations have been run under different settings obtaining similar results and, in the following, we will report the results for four scenarios. The first simulation case assumed that the data were simulated as a mixture of three Gaussians, $  0.35 N(-2, 1) + 0.5 N(4,2) + 0.15 N(5,2.5)$, the second scenario, as a mixture of two skew-normal, $0.65 SN(0,1,5) + 0.35 SN(4,2,3)$, the third as a mixture of a Gamma and a Gaussian, $0.25 \text{Ga}(2,1) + 0.75 N(3,1)$, while the last one as a simple exponential distribution with mean parameter 2. 

For each scenario, we generated sample of sizes $n=50,100,200$ and we fit the two mixture models to $1{,}000$ replicated data sets.
The methods were compared based on a Monte Carlo approximation to the mean Kullback-Leibler divergence and $L_2$ distance, defined as
\begin{equation}
\label{eq:kll2}
	KL(f,g) = \int f(x) \log(f(x)/g(x)) \d x, \quad \, L_2(f, g) = \left( \int (f(x) - g(x))^2 \d x\right)^{1/2}.
\end{equation}
In implementing the blocked Gibbs samplers of the two models the first $1{,}000$ iterations were discarded as a burn-in and the next $5{,}000$ samples were used to calculate the posterior mean of the density on a fine grid of points of the domain.  For our mixture of skew-normals we choose, as hyperparameters, $\xi_0 = \overline{y}$, the sample mean, and $\kappa = s^2$, the sample variance, $\psi_0 = 10$, and $a = b = 1/2$. Hyperparameters for the mixture of Gaussian were fixed as: the location mean $\mu_0 = \overline{y}$, the location scale $\kappa = s^2$, and the precision gamma hyperparameters equal to $\nu_1 = \nu_2 = 1$.
For the precision parameter of the DP prior we assigned a Gamma hyperprior as in \citet{art:esco:west:1995} in both cases. The values of the density for a wide variety of points of the domain were monitored to check for convergence and mixing. The results of the simulation are reported in Table~1. 

\begin{table} 	\footnotesize								
\caption{Kullback-Leibler divergence and $L_2$ distance for the mean posterior densities, posterior mean number of occupied cluster components and posterior mean of the DP precision parameter}	
\centering	
\begin{tabular}{llrrrrrrrr} \hline									
&	&	\multicolumn{4}{c}{Scenario 1: {\sl mix of normals}} &	\multicolumn{4}{c}{Scenario 2: {\sl mix of skew-normals}} \\
$n$ &	Kernel 	&	KL &	$L_2$ &	$E(k|-)$ &	$E(\alpha|-)$ &	KL &	$L_2$ &	$E(k|-)$ &	$E(\alpha|-)$ \\
\hline									
50 &	Gaussian 	& 	0.237 &	0.146 &	3.940 &	0.908 &	0.391 &	0.280 &	3.137 &	0.694 \\
&	Skew-normal & 	0.312 &	0.158 &	2.800 &	0.614 &	0.465 &	0.286 &	3.241 &	0.720 \\
100 &	Gaussian 	& 	0.107 &	0.099 &	4.039 &	0.788 &	0.261 &	0.242 &	3.282 &	0.622 \\
&	Skew-normal & 	0.118 &	0.103 &	2.788 &	0.519 &	0.260 &	0.225 &	3.209 &	0.606 \\
200 &	Gaussian 	& 	0.051 &	0.070 &	3.970 &	0.671 &	0.182 &	0.209 &	3.703 &	0.619 \\
&	Skew-normal & 	0.051 &	0.071 &	2.736 &	0.445 &	0.148 &	0.178 &	3.369 &	0.558 \\
&	&	\multicolumn{4}{c}{Scenario 3: {\sl mix gamma+normal}} &	\multicolumn{4}{c}{Scenario 4: {\sl exponential}} \\
$n$ &	Kernel &	KL &	$L_2$ &	$E(k|-)$ &	$E(\alpha|-)$ &	KL &	$L_2$ &	$E(k|-)$ &	$E(\alpha|-)$ \\
\hline									
50 &	Gaussian 	& 	0.260 &	0.165 &	3.482 &	0.789 &	1.237 &	0.497 &	3.884 &	0.884 \\
&	Skew-normal & 	0.308 &	0.163 &	3.331 &	0.749 &	1.284 &	0.495 &	4.161 &	0.955 \\
100 &	Gaussian 	& 	0.140 &	0.124 &	3.989 &	0.775 &	0.967 &	0.455 &	4.617 &	0.905 \\
&	Skew-normal & 	0.159 &	0.125 &	3.556 &	0.681 &	0.959 &	0.445 &	4.664 &	0.917 \\
200 &	Gaussian 	& 	0.073 &	0.089 &	4.405 &	0.748 &	0.781 &	0.414 &	5.374 &	0.927 \\
&	Skew-normal & 	0.078 &	0.090 &	3.791 &	0.635 &	0.720 &	0.392 &	5.364 &	0.926 \\
\hline									
\end{tabular}									
\label{tab:continue}									
\end{table}			

The mixture of Gaussians fit often requires a higher  number of occupied clusters. 
Our location-scale-shape mixture has generally comparable performances in terms of Kullback-Leibler and $L_2$ distance from the truth. However, for small samples ($n=50$), our proposal has higher measures of distance if compared with the mixture of Gaussians. In fact, this is not surprising, since it is well known that often the inference with the skew-normal model is not particularly efficient for small sample sizes. As expected, for high $n$ our method is perfectly comparable and sometimes preferable, to the mixture of Gaussians.

\subsection{Probability mass function estimation}
\label{sec:pmfestimation}

A second Monte Carlo experiment has been conducted to assess the performance of the proposed approach with respect to the rounded mixture of Gaussians of \citet{cana:duns:2011}. 
 Also here, we report the results only for four scenarios, although different simulation settings lead to  similar conclusions. 
The first simulation case assumed the data were simulated from a 3-values probability mass function defined as $p(2) = p(4) = 0.2$, $p(3)=0.6$ and $p(j)=0$ for $j \notin \{2, 3, 4\}$, the second scenario, assumed the data were simulated from an underdispersed probability mass function, the Conway-Maxwell-Poisson distribution \citep{shmu:etal:2005} with parameters $\lambda=3$ and $\nu=5$, the third and the fourth assumed the data were simulated from mixtures of Poissons, namely $0.65 \text{Po}(2.5) + 0.35 \text{S-Po}(0.5, 9)$ and $0.6 \text{Po}(0.5) + 0.4 \text{R-Po}(0.5, 12)$, where S-Po($\lambda$, $\nu$) denotes Poisson distribution shifted to the right of the value $\nu$ and R-Po($\lambda, \gamma$) is a reversed Poisson distribution with negative skewness and probability distribution defined as $p(y=j) = C(\lambda, \gamma) \lambda^{\gamma - j} \exp\{-\lambda\}$, where $ C(\lambda, \gamma)$ is a suitable normalizing constant, $\gamma$ is an integer and $\lambda \in \R^+$. 
For each scenario, we generated $1{,}000$ data sets for each of the following sample size: $n=50,100,200$. As before, the methods were compared based on a Monte Carlo approximation to the mean Kullback-Leibler divergence and $L_2$ distance defined in \eqref{eq:kll2}, where here the integrals are with respect to the counting measure. 

In implementing the blocked Gibbs samplers of the two models the first $1{,}000$ iterations were discarded as a burn-in and the next $5{,}000$  samples were used to calculate the posterior mean of the probability mass function between 0, 1, \dots up to a suitable upper bound definied on a case-by-case basis. We set the hyperparameters similarly to the previous section. The results of the simulation are reported in Table~2. 

\begin{table}	\footnotesize								
\caption{Kullback-Leibler divergence and $L_2$ distance for the mean posterior probability mass functions, posterior mean number of occupied cluster components and posterior mean of the DP precision parameter}	
\begin{tabular}{llrrrrrrrr} \hline									
&	&	\multicolumn{4}{c}{Scenario 5: {\sl 3-values distribution}} &	\multicolumn{4}{c}{Scenario 6: {\sl Compoisson}} \\
$n$ &	kernel &	KL &	$L_2$ &	$E(k|-)$ &	$E(\alpha|-)$ &	KL &	$L_2$ &	$E(k|-)$ &	$E(\alpha|-)$  \\ \hline
50 &	Gaussian 	& 	1.229 &	0.124 &	1.282 &	0.263 &	0.113 &	0.123 &	4.031 &	0.924 \\
&	skew-normal & 	1.051 &	0.081 &	2.143 &	0.465 &	0.175 &	0.153 &	3.123 &	0.691 \\
100 &	Gaussian 	& 	1.244 &	0.105 &	1.222 &	0.216 &	0.080 &	0.098 &	4.512 &	0.886 \\
&	skew-normal & 	1.122 &	0.057 &	1.862 &	0.341 &	0.086 &	0.108 &	3.388 &	0.644 \\
200 &	Gaussian 	& 	1.259 &	0.105 &	1.222 &	0.216 &	0.062 &	0.083 &	5.298 &	0.915 \\
&	skew-normal & 	1.179 &	0.057 &	1.862 &	0.341 &	0.050 &	0.082 &	3.862 &	0.650 \\
&	&	\multicolumn{4}{c}{Scenario 7: {\sl mix of Poissons}} &	\multicolumn{4}{c}{Scenario 8: {\sl mix of R-Poissons}} \\
$n$ &	kernel 	&	KL &	$L_2$ &	$E(k|-)$ &	$E(\alpha|-)$ &	KL &	$L_2$ &	$E(k|-)$ &	$E(\alpha|-)$ \\ \hline 
50 &	Gaussian 	& 	0.116 &	0.148 &	3.458 &	0.775 &	0.069 &	0.236 &	1.489 &	0.31 \\
&	skew-normal & 	0.200 &	0.155 &	1.068 &	0.216 &	0.063 &	0.178 &	2.334 &	0.512 \\
100 &	Gaussian 	& 	0.091 &	0.132 &	4.189 &	0.815 &	0.052 &	0.205 &	1.411 &	0.252 \\
&	skew-normal & 	0.107 &	0.125 &	1.280 &	0.229 &	0.024 &	0.105 &	2.113 &	0.392 \\
200 &	Gaussian	 & 	0.075 &	0.120 &	5.753 &	1.005 &	0.045 &	0.192 &	1.389 &	0.219 \\
&	skew-normal & 	0.075 &	0.107 &	2.545 &	0.422 &	0.012 &	0.075 &	1.690 &	0.271 \\
\hline									
\end{tabular}									
\label{tab:simdiscrete}									
\end{table}

The posterior mean number of occupied clusters in the rounded skew-normal mixture is not always smaller than that of the Gaussian case. Particularly this is not true in Scenario 5 and 6, where we still get a better fit in terms of Kullback-Leibler divergence and $L_2$ distance from the truth. Note that in the latter scenarios, the probability mass function is concentrate in a small number of points of the domain. 

\section{Applications}

\subsection{Galaxy data}
\label{sec:galaxy}

First we applied our modeling framework to the famous Galaxy dataset \citep{roed:1990}. The dataset consists on the velocity of 82 galaxies. The histogram of the speeds reveals that the data are clearly multimodal. This feature supports the Big Bang theory, as the different modes of density can be though as clusters of galaxies moving at different speed. The data analysis was already carried out via DP mixture of Gaussians by \citet{art:esco:west:1995}, and we compare their results with our mixture of skew-normal. 

In implementing our blocked Gibbs sampler the first $1{,}000$ iterations were discarded as a burn-in and the next $10{,}000$  samples were used to calculate the posterior mean of the density on a fine grid of points of the domain. As a default non informative choice, we set the hyperparameters $\xi_0 = \overline{y}$, $\kappa = s^2$, $\psi_0 = 10$, and $a = b = 1/2$.
Since the scientific interest is galactic clustering, we followed  \citet{art:esco:west:1995} in letting the precision DP parameter $\alpha \sim \text{Ga}(1/2, 50)$. The posterior mean predictive density is plotted in Figure~\ref{galaxy-fit} along with the empirical histogram and the estimate obtained via DP mixture of Gaussians. Our fitted density turns out to be smoother, in particular around the central area of the domain where the DPM of Gaussians clearly detects two separate modes. 

The  posterior distribution of the number of occupied clusters in the two models is reported in  Figure~\ref{galaxy-clus}. It is evident that our approach leads to a generally lower number of occupied clusters and that the posterior distribution of the number of clusters is coherent with the number of observed modes of the density. Indeed, if a galactic cluster is skewed, a single skew-normal component is sufficient, while two or more mixture components with collapsing modes are needed when using Gaussian kernels. 

\begin{figure}[ht]
\begin{center}
\includegraphics[scale=.55]{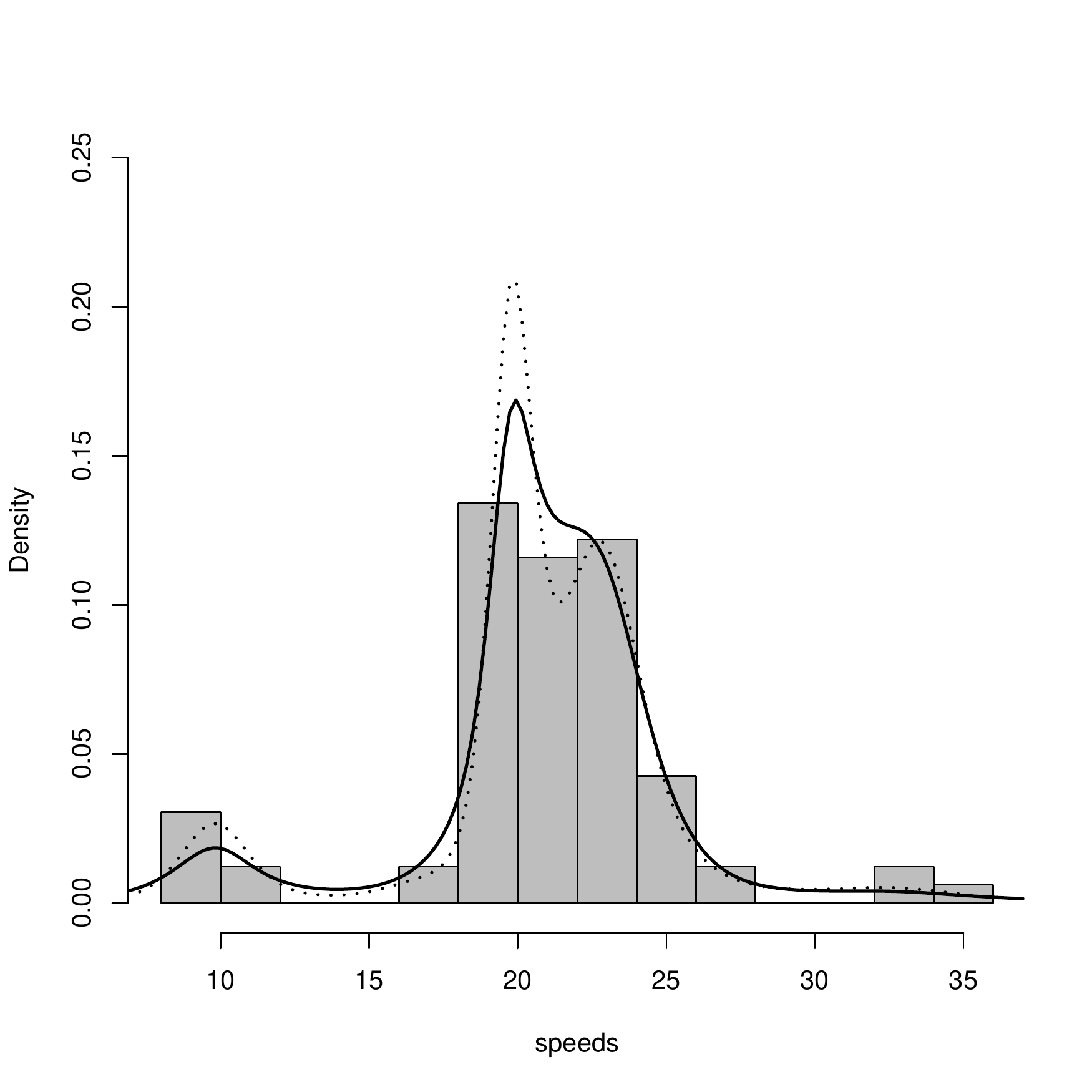}
\end{center}
\caption{Posterior estimated densities for the location-scale-shape mixture of skew-normal (continuous line) and of location-scale mixture of Gaussians (dotted line) along with the histogram of the galaxy data}
\label{galaxy-fit}
\end{figure}

\begin{figure}[ht]
\begin{center}
\includegraphics[scale=.35]{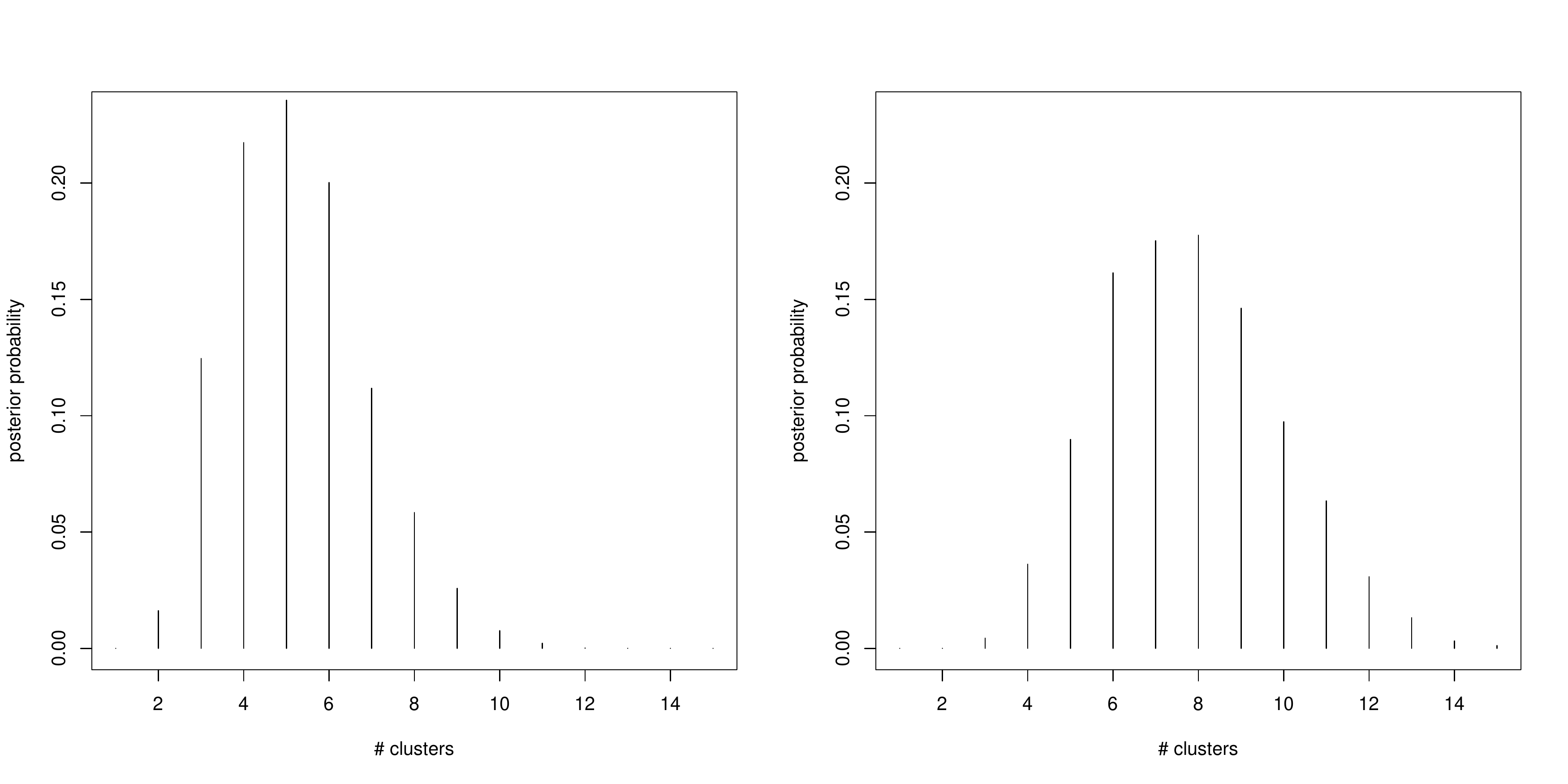}
\end{center}
\caption{Posterior probability of the average number of occupied clusters in the location-scale-shape mixture of skew-normal (left) and of location-scale mixture of Gaussians (right) for the galaxy dataset}
\label{galaxy-clus}
\end{figure}

\subsection{Childbirth age data}
\label{sec:milan}


We apply our modeling framework to data on the births in the Milan municipality in 2011 divided by areas 
to estimate the different age-specific probability of childbirth. 
Milan is one of the biggest and multiethnic cities in Italy being the center of many economic activities and the destination of strong national and international immigration. In this context, fertility may be affected by socio-demographical and economical differences among and within the different urban areas. The presence of different subpopulations with different educational level, socio-economic status or citizenships, inside each area may give rise to asymmetric distributions of the age of the mother at childbirth.
For small populations, such as residents in Milan, there are not many specific studies on fertility indicators, and we may expect different behaviors of women with respect to the age at childbirth. 
Given this variety of possible patterns, a nonparametric approach to density estimation seems appropriate to both smooth the random noise affecting the curves, and to account for different patterns.

The use of mixture models in demography is not new. In the context of country age-specific fertility rate estimation, for example, several finite mixture models have been discussed \citep{Chandola1999,KohlerOrtega2000,PeristeraKostaki2007,Schmertmann2003}.

Let $y$ be the age of the mother at childbirth and assume that we want to model the probability distribution $p(y)$. 
In fact, even if age is ideally continuous, data are rounded to the lower integer. Hence $p(y)$ is a probability mass function defined on the positive integers and thus we estimate $p(\cdot)$ with model \eqref{eq:rsn}. In this context, the use of the skew-normal kernel has the advantage that if the data present different sub-populations whose fertility curves are typically not symmetric, each of them may be fitted by one specific component. Otherwise if a single asymmetric population is present, a single skew-normal may be sufficient to obtain a satisfactory fit. 

To implement our Gibbs sampler, the first $1{,}000$ iterations were discarded as a burn-in and the next $5{,}000$ draws were used to calculate the posterior mean of the probability mass function for $15, \dots, 50$ years of the women.  As posterior estimate, we consider the mean probability mass functions in the nine zones, reported in Figure~\ref{sn-empirica} along with the empirical estimate. 

Our procedure allows for smoothing across the age of childbirth and this is evident in Figure~\ref{sn-empirica}, where the mean of the posterior probability mass function is smoother than the empirical estimate, which has an erratic behavior.  However, our procedure is also able to catch the shape of each probability mass function. For example, zone 1, 3, and 5 are almost symmetric with, in zone 3, only mild left skewness, and in zone 1 high concentration around the mean. These probability mass functions clearly show a delay in childbirth, with respect to classical curves, but also suggest the presence of a common fertility behavior inside these areas. Other areas, instead, present a small hump around 20-25 years. In zone 4 and 6 this is fairly evident, while in zone 8 and 9 this is only partially noticeable. The former areas are likely to have at least two subpopulations, with the smaller consisting in women anticipating the childbirth. Most of the estimated probability mass functions exhibit moderate skewness to the left, sign of a general trend of the majority of women in the area to postpone the age at childbirth, but also indicator of the presence of subgroups that anticipate it.

\begin{landscape}
\begin{figure}[ht]
\begin{center}
\includegraphics[scale=.65]{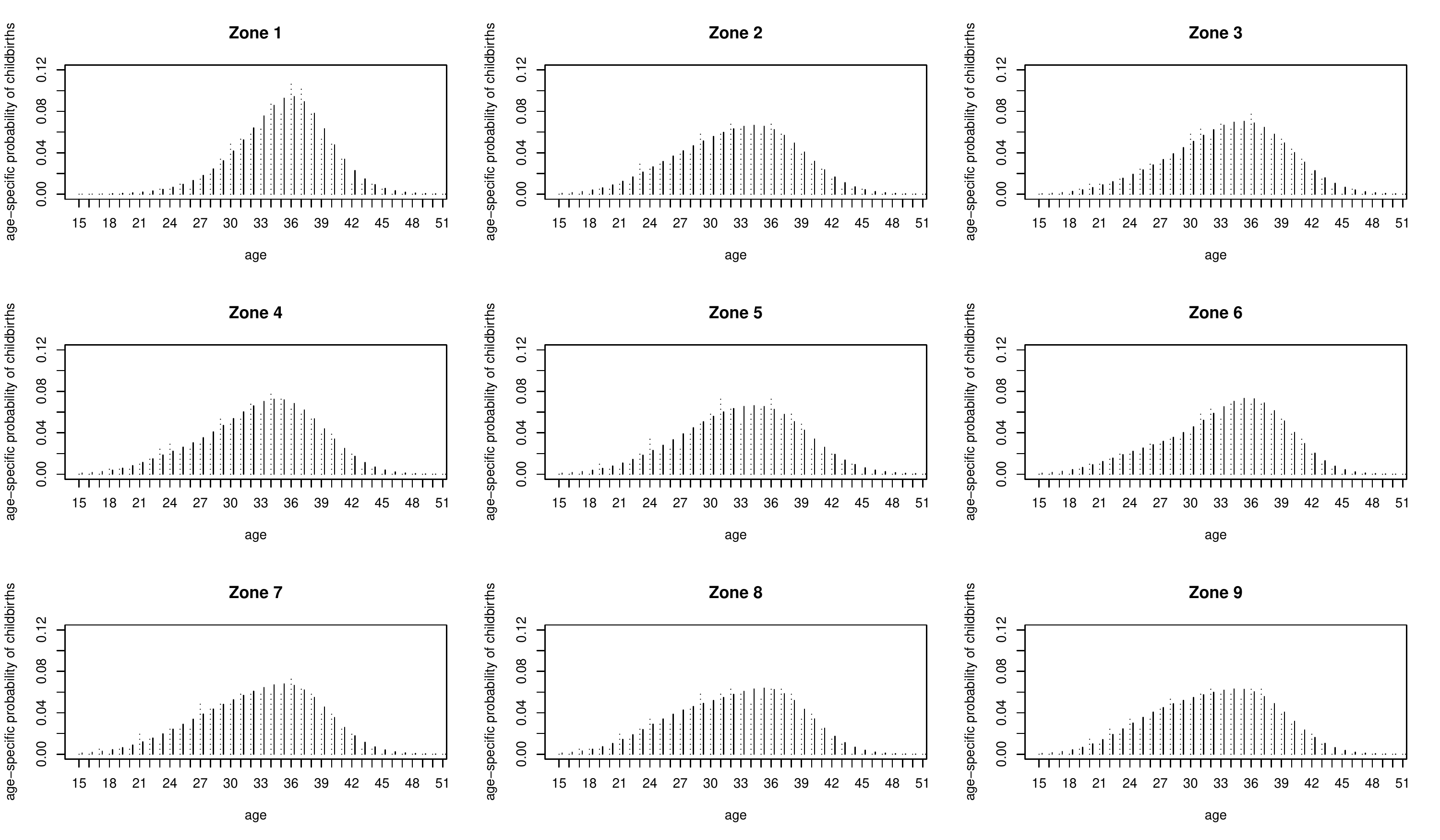}
\end{center}
\caption{Posterior mean probability mass function (black) and empirical probability mass function (dotted) for the age of the mother at childbirth in the nine zone of Milan.}
\label{sn-empirica}
\end{figure}
\end{landscape}


\section{Discussion}

In this paper we have discussed nonparametric location-scale-shape mixture of skew-normal kernels for density estimation and its extension to model discrete probability mass functions.
These classes of models have the particular advantage of determining clusters with different shapes, allowing for several degrees of positive and negative skewness. This has been shown to have an impact in real applications where the model-based clustering may have some specific interpretation. 
We showed that this class of models has large support and asymptotic posterior consistency. 
Simulations confirm the asymptotic behavior showing a substantial equivalence in the quality of fitting between the mixture of Gaussians and the mixture of skew-normals. However the number of occupied clusters is typically quite smaller in our model, thus allowing easier interpretation, when it is needed. 
Future research can  investigate multivariate extensions of location-scale-shape mixture models.  


\section*{Appendix}

\begin{proof}[Theorem 1]
The proof is similar to that of Theorem 3.2 of \citet{tokd:2006} with major adaptations involving the properties of the skew-normal distribution. We report all the passages for sake of completeness. Note that
\begin{equation}
\int f_0(x) \log \frac{f_0(x)}{f(x)} \d x = \int f_0(x) \log \frac{f_0(x)}{\tilde{f}(x)}  + \int f_0(x) \log \frac{\tilde{f}(x)}{f(x)},
\label{eq:kl}
\end{equation}
where $\tilde{f}$ and $f$ are densities obtained via a mixture prior of the type in \eqref{eq:lssmixture}. Therefore the result would follow for any $\epsilon >0$ if we can find a $\tilde P$ inducing a $\tilde{f}$ which makes both summands of the right hand side of \eqref{eq:kl} less than $\epsilon/2$ for every $f$ induced by $P \in \mathcal{W}$, with $\mathcal{W}$ a weak neighborhood of $\tilde P$.

First we show how to construct $\tilde P$. Let 
\[
	\d P_n(\xi, \omega, \lambda) = t_n \I_{\xi \in [-n,n]}f_0(\xi) \delta_{\omega_n}(\omega) \delta_{\lambda_n}(\lambda)
\]
where $\omega_n = n^{-\eta}$, $\lambda_n = 1 + n^{-\beta}$, with $\beta>\eta$, $t_n = (\int_{[-n,n]} f_0(t)\d t)^{-1}$, $\I$ is the indicator function and $\delta_x$ is the Dirac delta mass at a point $x$. Under these assumptions we introduce the density $f_n$ induced by the mixture \eqref{eq:lssmixture} with $P = P_n$, that is
\[
	f_n(x) = t_n \int_{[-n,n]} \frac{2}{\omega_n} \phi\left(\frac{x-\xi}{\omega_n}\right) \Phi\left(\lambda_n \frac{x-\xi}{\omega_n}\right) f_0(\xi) \d \xi.
\]
Now recall that the cdf of $Z \sim SN(0,1,\lambda)$  can be written as the sum of two terms, namely
\begin{equation}
\text{Pr}(Z<z) = \Phi(z) - 2 T(z,\lambda),
\end{equation}
where $T(z,\lambda)$ is the Owen's $T$ function defined as
\begin{equation*}
 T(h,a)=\frac{1}{2\pi}\int_{0}^{a} \frac{e^{-\frac{1}{2} h^2 (1+x^2)}}{1+x^2}  dx.
\end{equation*}
A useful property of the Owen's $T$ is that $T(-h,a) = T(h,a)$. Hence
\begin{align*}
\int_{[-c,c]}2 \phi(t) \Phi(\lambda t) \d t & = \int_{[-\infty,c]}2 \phi(t) \Phi(\lambda t) \d t - \int_{[-\infty,-c]}2 \phi(t) \Phi(\lambda t) \d t \\
	& = \Phi(c) - 2 T(c,\lambda)  - \Phi(-c) + 2 T(-c,\lambda)\\
	& = \int_{[-c,c]} \phi(t) \d t.
\end{align*} 
Thus, find a constant $c$ so that $\int_{[-c,c]}\phi(t) \d t > 1 - \epsilon$ and fix an $x \in \R$. For $n$ large, one obtains
\begin{equation*}
\inf_{y \in (x-c\omega_n, x+c\omega_n)}f_0(y) (1-\epsilon) 
< \frac{f_n(x)}{t_n} <
\sup_{y \in (x-c\omega_n, x+c\omega_n)}f_0(y) + M\epsilon.
\end{equation*} 
Since $t_n \to 1$ and $\omega_n \to 0$, the equation above implies that
\begin{equation}
	\log  \frac{f_0(x)}{f_n(x)} \to 0 \text{ for all $x \in \R$}.
\label{eq:seqforDMC}
\end{equation}
To conclude the first part of the proof we use the dominated convergence theorem to show that $\int f_0(x) \log  \frac{f_0(x)}{f_n(x)} \d x \to 0  $. This is done mainly borrowing the techniques used in \citet{tokd:2006} and \citet{art:wu:ghos:2008} while adapting them to our location-scale-shape mixture setting.  Find a function $g(x)$ such that it is $f_0$ integrable and that dominates the absolute value of the left hand side of \eqref{eq:seqforDMC}. An upper bound for $f_n(x)$ can be obtained as a consequence of assumption ($i$) of Theorem~\ref{theo:support}.
\begin{equation}
	f_n(x) < Mt_n < M t_1.
\label{eq:upboundfn}
\end{equation}

To find a lower bound for $f_n(x)$, first consider $|x| \geq  n$, where we have
\begin{align*}
	f_n(x)  & = t_n \int_{[-n,n]} \frac{2}{\omega_n} \phi\left(\frac{x-\xi}{\omega_n}\right) \Phi\left(\lambda_n \frac{x-\xi}{\omega_n}\right) f_0(\xi) \d \xi \\
		& \geq  t_n \int_{[-n,n]} \frac{2}{\omega_n} \phi\left(\frac{x +n}{\omega_n}\right) \Phi\left(\lambda_n \frac{x+n}{\omega_n}\right) f_0(\xi) \d \xi \\
		& =   \frac{2}{\omega_n} \phi\left(\frac{x +n}{\omega_n}\right) \Phi\left(\lambda_n \frac{x+n}{\omega_n}\right)  \\
		& =   2 n^{\eta} \phi\left( xn^{\eta} + n^{1+\eta} \right) \Phi\left\{(1+n^{-\beta} )( xn^{\eta} + n^{1+\eta}) \right\}  \\
		& \geq   2 n^{\eta} \phi\left( xn^{\eta} + n^{1+\eta} \right) \Phi\left(x \right)  \\
		& \geq   2 |x|^{\eta} \phi\left( 2|x|^{1+\eta} \right) \Phi\left(x\right)
\end{align*}
where the last but one inequality is true since $\Phi$ is monotone increasing, and the last one follows from the fact that $\tau^\eta \phi(\tau^\eta (|x| +\tau))$ is decreasing for $\tau >1$. 
Define $\psi_n(x)$ satisfying assumption $(iii)$ as $\psi_n(x) = \inf_{t\in (x-\omega_n,x+\omega_n)} f_0(t)$,
and let $A_n = [-n,n] \bigcap (x-\omega_n,x+\omega_n)$. Then, for $1 \geq n \geq |x|$, we have
\begin{align*}
	f_n(x)  & \geq t_n \int_{A_n} \frac{2}{\omega_n} \phi\left(\frac{x-\xi}{\omega_n}\right) \Phi\left(\lambda_n \frac{x-\xi}{\omega_n}\right) f_0(\xi) \d \xi \\
		& \geq  t_n \psi_n(x) \int_{A_n} \frac{2}{\omega_n} \phi\left(\frac{x -\xi}{\omega_n}\right) \Phi\left(\lambda_n \frac{x-\xi}{\omega_n}\right) \d \xi \\
		& \geq c\psi_1(x).
\end{align*}
Now consider
\begin{align*}
\int_{A_n} \frac{2}{\omega_n} \phi\left(\frac{x -\xi}{\omega_n}\right) \Phi\left(\lambda_n \frac{x-\xi}{\omega_n}\right) \d \xi 
& \geq \int_{0}^1 2 \phi\left(t \right) \Phi\left(-\lambda_n t\right) \d t \\
& \geq \int_{0}^1 2 \phi\left(t \right) \Phi\left(-t\right) \d t  \\
& \geq \int_{0}^1 \phi\left(t \right) \d t - [\Phi(1) \{1-\Phi(1)\} + 1/4] =  c,
\end{align*}
where the last inequality is true since $2T(h,1) = \Phi(t) \{1-\Phi(t)\}$, by the properties of the Owen's $T$ function. 
Therefore for $0 < R < n$, 
\begin{equation}
	\log \frac{f_0(x)}{f_n(x)} \leq u(x) = 
	\begin{cases}
		\log \frac{f_0(x)}{c \psi_1(x)} & \text{for $|x|<R$} \\
		\max \left\{ \log \frac{f_0(x)}{c \psi_1(x)}, \log \frac{f_0(x)}{  2 |x|^{\eta} \phi\left( 2|x|^{1+\eta} \right) \Phi\left(x \right)   } \right\} & \text{for $|x|\geq R$} \\
	\end{cases}
\label{eq:loboundfn}
\end{equation}

Using \eqref{eq:upboundfn} and \eqref{eq:loboundfn} we have
\[
 \left| \log \frac{f_0(x)}{f_n(x)} \right| \leq \max\left\{ \left|\log \frac{f_0(x)}{M t_1}\right|, u(x)  \right\}.
\]
Since the above quantity is $f_0$ integrable, by assumptions ($ii$), ($iii$), and ($iv$), we can invoke the dominated convergence theorem and conclude $ f_0(x) \log  \frac{f_0(x)}{f_n(x)} \d x \to 0 $. 
Letting $\tilde{f} = f_n$, for large $n$, the first summand of equation \eqref{eq:kl} can be made less than $\epsilon/2$.
Using the following Lemma~\ref{lem:astokdar}, which is similar to Lemma 3.1 of \citet{tokd:2006}, we are able to find an upper bound for the second summand and this concludes the proof.  
\end{proof}

\begin{lemma}
\label{lem:astokdar}

Consider an $f_0$ with finite second moment. Suppose that $\tilde f$ is induced by the mixture prior \eqref{eq:lssmixture} and the mixing distribution $\tilde P$ with $\tilde{P}\{ (-a,a) \times (\omega_l, \omega_u) \times (0,s)\} = 1$ for given $a > 0$, $0 < \omega_l < \omega_u$ and $s>0$. 
Then for any $\epsilon/2 > 0$, there exists a weak neighborhood $\cal W$ of $\tilde P$  such that for any $f$ induced via the mixture \eqref{eq:lssmixture} and $P \in \mathcal{W}$,
\[
	\int f_0(x) \log \frac{\tilde{f}(x)}{f(x)} \d x \leq \epsilon/2.
\]
\end{lemma}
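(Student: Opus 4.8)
The plan is to follow Lemma~3.1 of \citet{tokd:2006}, but to organize everything around a single \emph{one-sided} bound: it suffices to control $\log\big(\tilde f(x)/f(x)\big)$ from above by an $f_0$-integrable function, uniformly over a weak neighborhood $\mathcal W$ of $\tilde P$, since the negative part of the integrand only decreases the integral. Write $K=[-a,a]\times[\omega_l,\omega_u]\times[0,s]$ for the closed support of $\tilde P$ and fix a slightly larger open set $K_0\supset K$. First I would construct $\mathcal W$ through finitely many bounded continuous test functions so that every $P\in\mathcal W$ both (i) retains mass $P(K_0)\geq \tfrac12$ and (ii) matches, up to a small error, the $\tilde P$-masses of a finite partition of $K$ into small cells. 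Requirement (i) is obtained from a continuous $g$ with $\I_{K}\le g\le \I_{K_0}$, and (ii) from indicator-type functions on cells chosen to have $\tilde P$-null boundaries, so the portmanteau theorem applies.

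Next I would split $\int f_0(x)\log\big(\tilde f(x)/f(x)\big)\,\d x$ over $\{|x|\le R\}$ and $\{|x|>R\}$. On the compact region the reasoning is essentially the Gaussian one: since $f_{SN}(x;\xi,\omega,\lambda)$ is uniformly continuous in $(\xi,\omega,\lambda)$ over $[-R,R]\times K$, a partition/Riemann-sum argument combined with (ii) produces the one-sided estimate $f(x)\ge (1-\delta)\tilde f(x)$ uniformly on $[-R,R]$; as $\tilde f$ is bounded below by a positive constant there, $\log(\tilde f/f)$ is uniformly small and the compact contribution is below $\epsilon/4$ once $\mathcal W$ is taken tight enough.

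The tail region is where the skew-normal departs from the symmetric case, and I expect it to be the main obstacle. Using only $\Phi\le 1$ and the compact support of $\tilde P$ gives the crude upper bound $\tilde f(x)\lesssim \omega_l^{-1}\exp\{-(|x|-a)^2/(2\omega_u^2)\}$, while (i) yields $f(x)\ge \tfrac12\inf_{\theta\in K_0}f_{SN}(x;\theta)$. The difficulty is that for $x\to-\infty$ the positively skewed kernel has a \emph{super-Gaussian} left tail: the worst member of $K_0$ has $\lambda$ near $s$, so, via $\Phi(-t)\sim\phi(t)/t$, one finds $\inf_{\theta\in K_0}f_{SN}(x;\theta)$ of order $\exp\{-(1+s^2)x^2/(2\omega_l^2)\}/|x|$. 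Hence, unlike in \citet{tokd:2006}, the log-ratio genuinely acquires a quadratic term, and I would bound $\log\big(\tilde f(x)/f(x)\big)\le C_1+C_2|x|+C_3x^2$ on $\{|x|>R\}$ with $C_1,C_2,C_3$ depending on $a,\omega_l,\omega_u,s$ but not on $P\in\mathcal W$. This is precisely where the finite-second-moment hypothesis is used: $\int_{|x|>R}f_0(x)\,(C_1+C_2|x|+C_3x^2)\,\d x\to 0$ as $R\to\infty$, so $R$ can be fixed to make the tail contribution below $\epsilon/4$, and adding the two regions gives the bound $\epsilon/2$.

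The two delicate points I would treat carefully are (a) the super-Gaussian left-tail estimate for $\inf_{\theta\in K_0}f_{SN}$, which drives the $C_3x^2$ term and hence the second-moment requirement, and (b) making the partition and weak-neighborhood construction in the compact region uniform over $\mathcal W$; by contrast the right tail is harmless because $\Phi\to1$ there and only the scale mismatch $\omega_l<\omega_u$ contributes, yielding another quadratic term absorbed by the same assumption.
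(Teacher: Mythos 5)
Your proposal is correct, and its skeleton is the same as the paper's: both are adaptations of Lemma~3.1 of Tokdar (2006), splitting the integral into a compact region (handled by a partition/weak-neighborhood argument as in Theorem~3 of Ghosal et al., 1999) and a tail region (handled via the compact support of $\tilde{P}$ plus a mass-retention constraint on $P$). The genuine difference is in the tail estimate, and there your version is actually the more careful one. The paper's substitute for Tokdar's (A.1) lower-bounds the denominator by $2\omega_l^{-1}\phi\bigl(\omega_l^{-1}(|x|+a)\bigr)\Phi(0)P(G)$, i.e.\ it bounds the skewing factor $\Phi\bigl(\lambda\omega^{-1}(x-\xi)\bigr)$ below by $\Phi(0)=1/2$ uniformly over $G$. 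That inequality holds only when $\lambda(x-\xi)\geq 0$; on the left tail $x<-k$, with shapes $\lambda$ near $s>0$ in $G$ (and symmetrically on the right tail, since the paper's $G$ also contains negative shapes), the kernel decays like $\exp\{-(1+\lambda^2)z^2/2\}/|z|$ with $z=(x-\xi)/\omega$, which is strictly below any bound of the form $c\,\phi\bigl(\omega_l^{-1}(|x|+a)\bigr)$, so the paper's displayed chain of inequalities fails there. Your proposal identifies exactly this super-Gaussian behaviour, replaces the bound by $\log\bigl(\tilde{f}(x)/f(x)\bigr)\leq C_1+C_2|x|+C_3x^2$ with $C_3$ of order $(1+s^2)/(2\omega_l^2)$ instead of the paper's $1/(2\omega_l^2)$, and observes that the finite-second-moment hypothesis still makes this tail integrable, so the lemma's conclusion is unaffected --- only the constant in the tail bound changes. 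In short: same route, but your treatment of the tail (Mills-ratio lower bound at the worst parameter in the support, quadratic term inflated by $1+s^2$) is the correct one and is precisely the repair the paper's own argument needs; your remaining ingredients (neighborhood built from finitely many continuous functions giving $P(K_0)\geq 1/2$ and approximate matching of cell masses, uniform continuity on $[-R,R]\times K$, positivity of $\tilde{f}$ on compacts, and fixing $R$ before refining the partition) are standard and correctly ordered.
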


\begin{proof}
The proof follows from \citet{tokd:2006} with minor adaptations. The main difference is in showing that his (A.1) is true under our location-scale-shape mixture prior.
Denote by $G \subset \R \times \R^+ \times \R$ the set $\{(-a,a) \times (\omega_l, \omega_u) \times (-s,s)\}$. Choose $k > a +\omega_u$ such that 
$\int_{|x|>k} (|x|+a)^2/(2\omega_l^2) f_0(x) \d x < \epsilon/4$. Take $V = \{ P : P(G) > \omega_l/\omega_u \}$. Then $V$ is weak neighborhood of $\tilde P$. The following inequality substitutes equation (A.1) of \citet{tokd:2006}:
\begin{align*}
	\int_{|x|>k} f_0 \log \frac{\tilde{f}(x)}{f(x)} \d x & \leq  
	\int_{|x|>k} f_0 \log \frac{\int_G 2\omega^{-1} \phi(\omega^{-1}(x-\xi)) \Phi(\lambda \omega^{-1}(x-\xi) ) \d \tilde{P}(\xi, \omega, \lambda)}{ \int 2\omega^{-1} \phi(\omega^{-1}(x-\xi)) \Phi(\lambda \omega^{-1}(x-\xi) ) \d P(\xi, \omega, \lambda) } \d x \\
	& \leq \int_{|x|>k} f_0 \log \frac{ 2\omega_u^{-1} \phi(\omega_u^{-1}(|x|-a)) }{ 2\omega_l^{-1} \phi(\omega_l^{-1}(|x|+a))
 \Phi(0) P(G) } \d x \\
 	& \leq \int_{|x|>k} f_0(x)\left\{ \frac{(|x|+a)^2}{2\omega_l^2} - \log(4 )\right\} \d x < \epsilon/4.
\end{align*} 
The remaining of the proofs follows that of Theorem 3 of  \citet{ghos:etal:1999} and that of Lemma 3.1 of \citet{tokd:2006}.
\end{proof}

\begin{proof}[Lemma 1]
The prove the result,  we first exploit the stick breaking construction of the Dirichlet process, and considering each single component of the mixture we obtain the entropy of the set $\Theta_{a,l,u,s}=  \{\dSN{y}{\theta}{\omega}{\lambda}: |\theta|\leq a, l < \omega < u, |\lambda| \leq s \}$. This can be done with similar arguments to \citet{tokd:2006} but with particular additional technicalities related to the skew-normal distribution and Owen's $T$ function. Then we follow part of the proof of Theorem 5.10 by \citet{pati:duns:tokd:2011}, in particular, considering $\{\sum_{h=1}^{m} \pi_h f_{SN}(y; \xi_h, \omega_h, \lambda_h): |\xi_h|\leq a, l < \omega_h < u, |\lambda_h| \leq s \text{ for $h=1, \dots, m,$} \sum_{h>m} \pi_h < \epsilon \}$ and obtaining the entropy for the whole sieve \eqref{eq:sieve}. Some passages are similar to those of \citet{pati:duns:tokd:2011} but we write all the proof for sake of completeness. 

For any $f_1$, $f_2 \in \mathcal{F}$ we have 
\begin{equation}
||f_1 - f_2||_1 \leq 
\sum_{l=1}^{m} \pi_l^{(1)} \int |\dSN{y}{\theta_l^{(1)}}{\omega^{(1)}_l}{\lambda^{(1)}_l} -
					   \dSN{y}{\theta_l^{(2)}}{\omega^{(2)}_l}{\lambda^{(2)}_l} | \d y 
	+ \sum_{l=1}^{m} | \pi_l^{(1)} - \pi_l^{(2)}| + 2\epsilon,
\label{eq:sticksieve}
\end{equation}
We start by showing that two single skew-normal kernels with suitable parameters have $L_1$ distance smaller than~$\epsilon$.  To show that, first let $\zeta = \epsilon/6$ and $\eta = 2\tan(\pi\epsilon/16)$. Define $\sigma_m = l(1 + \zeta)^m$, $m \geq 0$. Let $M$ be the smallest integer so that $l(1 + \zeta)^M \geq u$. This clearly implies $M \leq (1 + \zeta)^{-1}\log(u/l) + 1$. For $1 \leq j \leq M$, let $N_j = \lceil \frac{\sqrt{32}}{\sqrt{\pi}} a/(\epsilon/3 \sigma_{j-1}) \rceil$. 
For $1\leq i \leq N_j$; $1\leq j\leq M$, and $1 \leq k \leq \lceil2s/\eta\rceil$ define
\[
	E_{ijk} = \left( -a +  \frac{2a(i-1)}{N_j},  -a +  \frac{2ai}{N_j} \right] \times (\sigma_{j-1}, \sigma_{j}] \times (-s +\eta(k-1), -s + \eta k].
\] 
Assume that $(\theta_1,\omega_1, \lambda_1)$ and $(\theta_2,\omega_2, \lambda_2) \in E_{ijk}$.
Now, to obtain a bound on the $L_1$ distance, first apply the triangular inequality twice, i.e.
\begin{align}
	|| \dSN{y}{\theta_1}{\omega_1}{\lambda_1} -  \dSN{y}{\theta_2}{\omega_2}{\lambda_2} ||_1
	 \leq & 	
	|| \dSN{y}{\theta_1}{\omega_1}{\lambda_1} -  \dSN{y}{\theta_1}{\omega_1}{\lambda_2} ||_1  + \notag \\
	&|| \dSN{y}{\theta_1}{\omega_1}{\lambda_2} -  \dSN{y}{\theta_1}{\omega_2}{\lambda_2} ||_1 + \notag \\
	&|| \dSN{y}{\theta_1}{\omega_2}{\lambda_2} -  \dSN{y}{\theta_2}{\omega_2}{\lambda_2} ||_1 
\label{eq:triangineq}
\end{align}
Then we show that each one of the three summand above is less then $\epsilon/3$. First consider 
\begin{align*}
	|| \dSN{y}{\theta_1}{\omega_1}{\lambda_1} -  \dSN{y}{\theta_1}{\omega_1}{\lambda_2} ||_1  
	& = \int_{-\infty}^\infty | \dSN{y}{\theta_1}{\omega_1}{\lambda_1} -  \dSN{y}{\theta_1}{\omega_1}{\lambda_2} | \d y  \\
 	& = 2 \int_{-\infty}^\infty \phi(y) | \Phi(\lambda_1 y)  -  \Phi(\lambda_2 y ) | \d y  \\
 	& \leq 4 \int_{0}^\infty \phi(y) \{ \Phi(\eta/2 y)  -  \Phi(-\eta/2 y ) \} \d y  \\
 	& = 4 \int_{0}^\infty \phi(y) \{ 1 - 2  \Phi(-\eta/2 y ) \} \d y  \\
 	& = 4\int_{0}^\infty \phi(y)  \d y - 2 \int_{0}^\infty 2 \phi(y)  \Phi(-\eta/2 y )  \d y   \\
 	& = 2 - 4\mbox{Pr}(X_{SN_{-\eta/2}} \geq 0) = 8 T(0,\eta/2) = \frac{4}{\pi} \mbox{atan}(\eta/2),
\end{align*}
where the last equality follows directly from the definition of the Owen's $T$ function.
Clearly since $\eta = 2\tan (\pi \epsilon/16)$, the above summand is bounded by $\epsilon/3$. 
Now consider the second summand of \eqref{eq:triangineq} substituting, without loss of generality $\theta_1=0$ and $\lambda_2= \lambda>0$, and assuming $\omega_2 > \omega_1$, by Csisz\'ar inequality we have
\begin{align*}
	|| \dSN{y}{0}{\omega_1}{\lambda} -  \dSN{y}{0}{\omega_2}{\lambda} ||_1 
 \leq \sqrt{ 2KL  (\dSN{y}{0}{\omega_2}{\lambda},  \dSN{y}{0}{\omega_1}{\lambda})},
\end{align*}
where $KL(f,g)$ is the Kullback-Leibler divergence. Then,
\begin{align*}
	KL  (\dSN{y}{0}{\omega_2}{\lambda},  \dSN{y}{0}{\omega_1}{\lambda}) 
	 = &\int_{-\infty}^\infty  \frac{2}{\omega_2} \phi \left(\frac{y}{\omega_2}\right) \Phi \left(\lambda \frac{y}{\omega_2}\right) 
		\log\left( \frac{\frac{2}{\omega_2} \phi \left(\frac{y}{\omega_2}\right) \Phi \left(\lambda \frac{y}{\omega_2}\right)}{\frac{2}{\omega_1} \phi \left(\frac{y}{\omega_1}\right) \Phi \left(\lambda \frac{y}{\omega_1}\right)} \right) \d y\\
	 = &\int_{-\infty}^\infty  \frac{2}{\omega_2} \phi \left(\frac{y}{\omega_2}\right) \Phi \left(\lambda \frac{y}{\omega_2}\right) 
		\frac{1}{2} y	^2  \left( \frac{\omega_2^2 - \omega_1^2 }{\omega_1^2 \omega_2^2}\right) \d y \, + \\
		& \int_{-\infty}^\infty  \frac{2}{\omega_2} \phi \left(\frac{y}{\omega_2}\right) \Phi \left(\lambda \frac{y}{\omega_2}\right)\log\left( \frac{\omega_1 \Phi \left(\lambda \frac{y}{\omega_2}\right)}{ \omega_2 \Phi \left(\lambda \frac{y}{\omega_1}\right)} \right) \d y\\
	 = & \frac{1}{2} \left( \frac{\omega_2^2 - \omega_1^2 }{\omega_1^2 \omega_2^2}\right) E (y^2) + 
	   E\left( \log\left( \frac{\omega_1 \Phi \left(\lambda \frac{y}{\omega_2}\right)}{ \omega_2 \Phi \left(\lambda \frac{y}{\omega_1}\right)} \right) \right).
\end{align*}
Since the second moment of a skew-normal distribution equals its squared scale parameter and the second expectation above is smaller than the first summand, we have
\begin{align*}
	|| \dSN{y}{\theta_1}{\omega_1}{\lambda_2} -  \dSN{y}{\theta_1}{\omega_2}{\lambda_2} ||_1 
	\leq  2 \frac{\omega_2-\omega_1}{\omega_1} \leq \zeta = \epsilon/3.
\end{align*}
Finally consider the third summand of \eqref{eq:triangineq}. Define $H$ to be an half-normal random variable with density $f_H$, which is obtained when the skew-normal shape parameter $\lambda \to \infty$. Then, 
\begin{align*}
|| \dSN{y}{\theta_1}{\omega_2}{\lambda_2} -  \dSN{y}{\theta_2}{\omega_2}{\lambda_2} ||_1 
	& \leq \int_{-\infty}^\infty \left| f_H\left( \frac{y - \theta_1}{\omega}\right)  -  f_H\left( \frac{y- \theta_2}{\omega}\right) \right| \d y  \\
	& = 2 \text{Pr}\left( H < \frac{\theta_2-\theta_1}{\omega_2}\right) \\
	& = 4 \Phi\left(\frac{\theta_2-\theta_1}{\omega_2}\right) -2  \\
	& =  2 \left(\frac{2}{\pi}\right)^{1/2} \int_0^{\frac{\theta_2-\theta_1}{\omega_2} } \exp\{-y^2/2\} \d y \\
& \leq 2 \left(\frac{2}{\pi}\right)^{1/2} \frac{|\theta_1 - \theta_2|}{\omega_2},
\end{align*}
which is a very loose bound equal to twice the same bound of the $L_1$ distance between two Gaussian distribution with same variance and different locations. See \citet{tokd:2006}, equation (4.1). It follows that also this last summand is less than $\epsilon/3$ which leads to
\[
|| \dSN{y}{\theta_1}{\omega_1}{\lambda_1} -  \dSN{y}{\theta_2}{\omega_2}{\lambda_2} ||_1 \leq \epsilon.
\]
Hence the partition given by $E_{ijk}$ induces a covering of $\Theta_{a,l,u,s}=  \{\dSN{y}{\theta}{\omega}{\lambda}: |\theta|\leq a, l < \omega < u, |\lambda| \leq s \}$ made of $\epsilon$ balls. If $N$ is the minimum number of $\epsilon$ ball we need to cover $\Theta_{a,l,u,s}$, we have

\begin{align}
	N  & \leq \left( \frac{2s}{\eta} + 1\right) \sum_{j = 1}^M  \left( \sqrt{\frac{32}{\pi}} \frac{a}{\epsilon/6 \sigma_{j-1}} +1 \right)  \notag \\
		& \leq  \left( c_1 s + 1\right) \left( c_2 \frac{a}{l} + c_3 s \log\frac{u}{l} +  1\right) \notag\\
		& \leq   d_1 \left(\frac{as}{l}\right) + d_2 \left(\frac{a}{l}\right) + d_3 s \log\left(\frac{u}{l}\right) + d_4 \log\left(\frac{u}{l}\right)  + s +1,
\label{entropy1}
\end{align}
where $c_1, c_2, c_3, d_1, d_2, d_3$, and $d_4$ are positive constants depending on $\epsilon$.

The entropy in \eqref{entropy1} is determinant in computing the entropy of the whole sieve. Going back to the stick-breaking representation \eqref{eq:sticksieve}, let $\Theta_{\pi} = \{ \pi^{m} = (\pi_1, \dots, \pi_m)\}$. Fix $\pi_1^{m}$ and $\pi_2^{m} \in \Theta_{\pi}$.
Let for $k = 1,2$, $V_h^{(k)} = \pi_h^{(k)}(1-\sum_{l<h}\pi_l^{(k)})$. Clearly $ \sum_{h=1}^{m} | \pi_h^{(1)} - \pi_h^{(2)}| < \epsilon$ if for each $h=1,\dots,m$, $|V_h^{(1)} - V_h^{(2)}| < \epsilon/m^2$. 
Since $V_h^{(1)}$, $V_h^{(2)} \in [0,1]$, the number of $\epsilon$-balls required to cover $\Theta_{\pi}$ is $(m^2/\epsilon)^{m}$ times a constant. Hence
\[
J(\epsilon,\mathcal{F}_{a,u,l,s,m}) \leq m \log\left\{ d_1 \left(\frac{as}{l}\right) + d_2 \left(\frac{a}{l}\right) + d_3 s \log\left(\frac{u}{l}\right) + d_4 \log\left(\frac{u}{l}\right)  + s +1 \right\} + d_3 m \log(d_4 m).
\]
\end{proof}

\bibliographystyle{apalike}
\bibliography{biblio}

\end{document}